\newcounter{mycounter}  
\newenvironment{noindlist}
 {\begin{list}{\arabic{mycounter}.~~}{\usecounter{mycounter} \labelsep=0em \labelwidth=0em \leftmargin=0em \itemindent=0em}}
 {\end{list}}
\newcommand{\head}{\textsf{head}}
\newcommand{\tail}{\textsf{tail}}
\newcommand{\ceil}[1]{\left\lceil #1 \right\rceil}
\newcommand{\Oh}{{O}}
\newtheorem{theorem}{Theorem}[section]
\newtheorem{lemma}{Lemma}
\newtheorem{invariant}{Invariant}
\theoremstyle{definition}   
\theoremstyle{remark}
\newcolumntype{R}[1]{>{\raggedleft\let\newline\\\arraybackslash\hspace{0pt}}m{#1}}
\begin{document}

\title{Dispersion on Trees\thanks{The research was supported in part by Israel Science Foundation grant 794/13.}}
\author[1]{Pawe\l{} Gawrychowski}
\author[1]{Nadav Krasnopolsky}
\author[2]{Shay Mozes}
\author[1]{Oren Weimann}
\affil[1]{University of Haifa, Israel}
\affil[2]{IDC Herzliya, Israel}

\date{}
\maketitle

\begin{abstract}
In the $k$-dispersion problem, we need to select $k$ nodes of a given graph so as to maximize
the minimum distance between any two chosen nodes. This can be seen as a generalization
of the independent set problem, where the goal is to select nodes so that the minimum distance
is larger than 1.
We design an optimal $\Oh(n)$ time algorithm for the dispersion problem on trees consisting
of $n$ nodes, thus improving the previous  $\Oh(n\log n)$ time solution from 1997. 

We also consider the weighted case, where the goal is to choose a set of nodes of total weight at least $W$. We present an $\Oh(n\log^2n)$ algorithm improving the previous $\Oh(n\log^4 n)$ solution. Our solution builds on the search version (where we know the minimum distance $\lambda$ between the chosen nodes) for which we present tight $\Theta(n\log n)$ upper and lower bounds. 
\end{abstract}

\section{Introduction}

\emph{Facility location} is a family of problems dealing with the placement of facilities on a network in order to optimize certain distances between the facilities, or between facilities and other nodes of the network. Such problems are usually if not always NP-hard on general graphs. There is a rich literature on approximation algorithms (see e.g.~\cite{DavidB.Shmoys1997,Vazirani2003} and references therein) as well as exact algorithms for restricted inputs. In particular, many linear and near-linear time algorithms were developed for facility location problems on edge-weighted trees.

In the most basic problem, called \emph{$k$-center}, we are given an edge-weighted tree with $n$ nodes and wish to designate up to $k$ nodes to be facilities, so as to minimize the maximum distance of a node to its closest facility. This problem was studied in the early 80's by Megiddo et al.~\cite{Megiddo1981} who gave an $\Oh(n\log^2n)$ time algorithm that was subsequently improved to $\Oh(n\log n)$ by Frederickson and Johnson \cite{Frederickson1983}.
In the early 90's, an optimal $\Oh(n)$ time solution was given by Frederickson~\cite{Frederickson1991a,Frederickson1990} using a seminal approach based on
parametric search, also for two other versions where points on edges can be designated as facilities or where we minimize over points on edges.
In yet another variant, called weighted $k$-center, every node has a positive weight and we wish to minimize the maximum weighted distance of a node to its closest facility. Megiddo et al.~\cite{Megiddo1981} solved this in $\Oh(n\log^{2}n)$ time, and
Megiddo and Tamir~\cite{Megiddo1983} designed an $\Oh(n\log^{2}n\log\log n)$ time algorithm when allowing points on edges to be
designated as facilities. The latter complexity can be further improved to $\Oh(n\log^{2}n)$ using a technique of Cole~\cite{Cole87}. 
A related problem, also suggested in the early 80's  \cite{Becker1982,Perl1981}, is \emph{$k$-partitioning}. In this problem the nodes have weight and we wish to delete $k$ edges in the tree so as to maximize the weight of the lightest resulting subtree. This problem was also solved by Frederickson in $\Oh(n)$ time \cite{Frederickson1991} using his parametric search framework. 

The focus of this paper is the  {\em $k$-dispersion} problem, where we wish to designate $k$ nodes as facilities so as to maximize the distances among the facilities. In other words, we wish to select $k$ nodes that are as spread-apart as possible. More formally, let $d(u,v)$ denote the distance between nodes $u$ and $v$, and for a subset of nodes $P$ let $f(P)=\min_{u,v\in P} \{d(u,v)\}$.

\begin{itemize} 
\item {\em The Dispersion Optimization Problem.} Given a tree with non-negative edge lengths, and a number $k$, find a subset $P$ of nodes of size $k$ such that $f(P)$ is maximized.
\end{itemize}

\noindent The dispersion problem can be seen as a generalization of the classical maximum independent set problem (that can be solved by binary searching for the largest value of $k$ for which the minimum distance is at least 2).  It can also be seen as a generalization of the diameter problem (i.e., when $k=2$).
It turns out that the dispersion and the $k$-partitioning problems are actually equivalent in the one-dimensional case (i.e., when the tree is a path). The reduction simply creates a new path whose edges correspond to nodes in the original path and whose nodes correspond to edges in the original path. However, such equivalence does not apply to general trees, on which $k$-dispersion seems more difficult than $k$-partitioning. In particular, until the present work, no linear time solution for $k$-dispersion was known. 
The dispersion optimization problem can be solved by repeatedly querying a {\em feasibility test} that solves the dispersion search problem. 

\begin{itemize}
\item  {\em The Dispersion Search Problem (feasibility test).}  Given a tree with non-negative edge lengths, a number $k$, and a number $\lambda$, find a subset $P$ of nodes of size $k$ such that  $f(P)\geq\lambda$, or declare that no such subset exists. 
\end{itemize}

\noindent Bhattacharya and Houle~\cite{Bhattacharya1991} presented a linear-time feasibility test, and used a result by Frederickson \cite{Frederickson1983} that enables binary searching over all possible values of $\lambda$ (i.e., all pairwise distances in the tree). That is, a feasibility test with a running time $\tau$ implies an $O(n \log n + \tau \cdot \log n)$ time algorithm for the dispersion optimization problem. Thus, the algorithm of Bhattacharya and Houle for the dispersion optimization problem runs in $O(n\log n)$ time. We present a linear time algorithm for the optimization problem. Our solution is based on a simplified linear-time feasibility test, which we turn into a sublinear-time feasibility test in a technically involved way closely inspired by Frederickson's approach.

In the {\em weighted} dispersion problem, nodes have non-negative weights. Instead of $k$ we are given $W$, and the goal is then to find a subset
$P$ of nodes of total weight at least $W$ s.t. $f(P)$ is maximized. Bhattacharya and Houle considered this generalization in~\cite{Bhattacharya1999}. They presented an $O(n \log^3 n)$ feasibility test for this generalization, that by the same reasoning above solves the weighted optimization problem in $O(n \log^4 n)$ time. We give an $O(n \log n$)-time feasibility test, and a matching lower bound.  Thus, our algorithm for the weighted optimization problem runs in $O(n \log^2 n)$ time. Our solution uses novel ideas, and differs substantially from Frederickson's approach.

\vspace{0.04in} \noindent {\bf Our technique for the unweighted dispersion problem.}
Our solution to the $k$-dispersion problem can be seen as a modern adaptation of Frederickson's approach based on a hierarchy of micro-macro decompositions. While achieving this adaptation is technically involved, we believe this modern view might be of independent interest. As in Frederickson's approach for $k$-partitioning and $k$-center, we develop a feasibility test that requires {\em linear} time preprocessing and can then be queried in \emph{sublinear} time.
Equipped with this sublinear feasibility test, it is still not clear how to solve the whole problem in $O(n)$ time, as in such complexity it is not trivial to represent all the pairwise distances in the tree in a structure that enables binary searching. To cope with this, we maintain only a subset of candidate distances and represent them using matrices where both rows and columns are sorted. Running feasibility tests on only a few candidate entries from such matrices allows us to eliminate many other candidates, and prune the tree accordingly. We then repeat the process with the new smaller tree.
This is similar to Frederickson's approach, but our algorithm (highlighted below) differs in how we construct these matrices,  in how we partition the input tree, and in how we prune it. 

Our algorithm begins by partitioning the input tree $T$ into $O(n/b)$ {\em fragments}, each with $O(b)$ nodes and at most two {\em boundary nodes} incident to nodes in other fragments: the root of the fragment and, possibly, another boundary node called the {\em hole}.
We use this to simulate a bottom-up feasibility test by jumping over entire 
fragments, i.e., knowing $\lambda$, we wish to extend in $O(\log b)$ time a solution for a subtree of $T$ rooted at the fragment's hole to a subtree of $T$ rooted at the fragment's root. This is achieved by efficient preprocessing: 
The first step of the preprocessing computes values $\lambda_1$ and  $\lambda_2$ such that (1) there is no solution to the search problem on $T$ for any $\lambda \geq \lambda_2$, (2) there is a solution to the search problem on $T$ for any $\lambda \le \lambda_1$, and (3) for {\em most} of the fragments, the distance between any two nodes is either smaller or equal to $\lambda_1$ or larger or equal to $\lambda_2$. This is achieved by applying Frederickson's parametric search on sorted matrices capturing the pairwise distances between nodes in the same fragment. The (few) fragments that do not satisfy property (3) are handled naively in $O(b)$ time during query time. 
The fragments that do satisfy property (3) are further preprocessed. We look at the path from the hole to the root of the
fragment and run the linear-time feasibility test for all subtrees hanging off from it. Because of property (3), this can be done in advance without knowing the actual exact value of $\lambda \in (\lambda_1,\lambda_2)$, which will only be determined at query time. 
Let $P$ be a solution produced by the feasibility test to a subtree rooted at a node $u$. It turns out that the interaction between $P$ and the solution to the entire tree depends only on two nodes of $P$, which we call the {\em certain} node and the {\em candidate} node. We can therefore conceptually replace each hanging subtree by two leafs, and think of the fragment as a caterpillar connecting the root and the hole. 
After some additional pruning, we can precompute information 
that will be used to accelerate queries to the feasibility test. During a query we will be able to jump over each fragment of size $O(b)$ in just $O(\log b)$ time, so the test takes $O(\frac{n}{b}\log b)$ time.

The above sublinear-time feasibility test is presented in~Section~\ref{sublinear f.t.}, with an overall preprocessing time of $O(n\log\log n)$.
The test is then used to solve the optimization problem within the same time. This is done, again, by maintaining an interval
$[\lambda_{1},\lambda_{2})$ and applying Frederickson's parametric search, but now we apply a heavy path decomposition to construct
the sorted matrices.
To accelerate the $O(n\log\log n)$ time algorithm, we construct a hierarchy of feasibility tests by partitioning the input tree
into larger and larger fragments. In each iteration we construct a feasibility test with better running time, until finally, after $\log^*n$ iterations
we obtain a feasibility test with $O(\frac{n}{\log ^4n} \cdot \log \log n)$ query-time, which we use to solve the dispersion optimization problem in linear  time. It is relatively straightforward to implement the precomputation done in a single iteration in $O(n)$ time. However, achieving total $O(n)$ time over all the iterations, requires reusing the results of the precomputation across iterations as well as an intricate global analysis of the overall complexity.
The algorithm is described in Section \ref{linear algorithm for the dispersion optimization problem}.

\vspace{0.04in} \noindent {\bf Our technique for the weighted dispersion problem.}
Our solution for the weighted case differs substantially from Frederickson's approach.
In contrast to the unweighted case, where it suffices to consider a single candidate node, in the weighted case each subtree might have a large number of candidate nodes. 
  To overcome this, we represent the candidates of a subtree with a {\em monotonically decreasing polyline}:
for every possible distance $d$, we store the maximum weight $W(P)$ of a subset of nodes $P$ such that the distance of every node of $P$ to the root of the subtree is at least $d$.
This can be conveniently represented by a sorted list of breakpoints, and the number of breakpoints is at most
the size of the subtree. We then show that the polyline of a node can be efficiently computed by merging the polylines of its children. If the polylines
are stored in augmented balanced search trees, then two polylines of size $x$ and $y$ can be merged in time $\Oh(\min(x,y)\log\max(x,y))$,
and by standard calculation we obtain an $\Oh(n\log^{2}n)$ time feasibility test. To improve on that and obtain an optimal $\Oh(n\log n)$ feasibility test,
we need to be able to merge polylines in $\Oh(\min(x,y)\log\frac{\max(x,y)}{\min(x,y)})$ time. 
An old result of Brown and Tarjan~\cite{Brown1980} is that, in exactly such time we can merge two {\em 2-3 trees} representing two sorted lists of length $x$ and $y$ (and also delete $x$ nodes in a tree of size $y$). This was later generalized by Huddleston and
Mehlhorn~\cite{huddlestonM82} to any sequence of operations that exhibits a certain locality of reference. However, in our specific
application we need various non-standard batch operations on the lists.
We present a simpler data structure for merging polylines that efficiently supports the required batch operations and works with  
any balanced search tree with split and join capabilities. Our data structure both simplifies and extends that of Brown and Tarjan~\cite{Brown1980}, and we believe it to be of independent interest.

\section{A Linear Time Feasibility Test}
\label{linear F.T.}

Given a tree $T$ with non-negative lengths and a number $\lambda$, the feasibility test finds a subset of nodes $P$ such that $f(P)\geq\lambda$
and $|P|$ is maximized, and then checks if $|P|\geq k$.
To this end, the tree is processed bottom-up while computing, for every subtree $T_{r}$ rooted at a node $r$, a subset of nodes $P$ such that
$f(P)\geq\lambda$, $|P|$ is maximized, and in case of a tie $\min_{u\in P}d(r,u)$ is additionally maximized.
We call the node $u\in P$, s.t. $d(r,u)<\frac{\lambda}{2}$, the \emph{candidate} node of the subtree (or a candidate with respect to $r$). There is at most one such candidate node.
The remaining nodes in $P$ are called \emph{certain} (with respect to $r$) and the one that is nearest to the root is called the certain node.
When clear from the context, we will not explicitly say which subtree we are referring to.

In each step we are given a node $r$, its children nodes $r_{1},r_{2},\ldots,r_{\ell}$ and, for each child $r_{i}$, a maximal valid
solution $P_{i}$ for the feasibility test on $T_{r_{i}}$ together with the candidate and the certain node. We obtain a maximal valid solution $P$ for the feasibility test on $T_{r}$ as follows:
\begin{enumerate}
\item Take all nodes in $P_{1},\ldots,P_{\ell}$, except for the candidate nodes.\label{linear time step 1}
\item Take all candidate nodes $u$ s.t. $d(u,r) \geq \frac{\lambda}{2}$ (i.e., they are certain w.r.t. $r$).\label{linear time step 2}
\item If it exists, take $u'$, the candidate node farthest from $r$ s.t. $d(u',r) < \frac{\lambda}{2}$ and $d(u',x)\geq \lambda$, where $x$ is the closest node to $u'$ we have taken so far.\label{linear time step 3}
\item If the distance from $r$ to the closest vertex in $P$ is at least $\lambda$, add $r$ to $P$.\label{linear time step 4}
\end{enumerate}
Iterating over the input tree bottom-up as described results in a valid solution $P$ for the whole tree. Finally, we check if $|P|\geq k$.

\begin{lemma}
\label{lineartimecorrectness}
The above feasibility test works in linear time and finds $P$ such that $f(P)\geq\lambda$ and $|P|$ is maximized.
\end{lemma}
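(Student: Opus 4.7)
The plan is to prove correctness by structural induction on the subtree $T_r$ while maintaining the invariant stated just before the lemma: $P$ is a valid solution (with $f(P)\geq\lambda$) of maximum cardinality that, among all such, maximizes $\min_{u\in P} d(r,u)$. The base case of a leaf is immediate, since $P=\{r\}$.

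For the linear time bound, the algorithm performs a single bottom-up traversal. I would carry with each computed $P_i$ a constant-size summary: its candidate $c_i$ (if any) with $d(c_i,r_i)$, its closest certain node with its distance from $r_i$, and thus the distance from $r_i$ to the closest node of $P_i$. Steps 1 and 2 then run in $O(\ell)$ time at a node $r$ with $\ell$ children. Step 3 reduces to sorting the at most $\ell$ non-promoted candidates by $d(\cdot,r)$ and, for each, checking the distance to the closest already-taken node outside its own subtree; that distance is obtainable from the minimum and second minimum over children $j$ of $d(r,r_j) + d(r_j, \text{closest in } P_j)$, so step 3 is $O(\ell)$. Step 4 is $O(1)$. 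Summing $\ell$ over all nodes yields $O(n)$.

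For the inductive step at an internal $r$, I would first verify validity by a short case analysis. Two nodes inside a single $P_i$ are $\lambda$-separated by induction. Two nodes $u\in P_i,v\in P_j$ with $i\neq j$ that are either certain w.r.t.\ their own root or candidates promoted in step 2 satisfy $d(u,r),d(v,r)\geq\lambda/2$, and since the path between them passes through $r$, $d(u,v)\geq\lambda$. The candidate added in step 3 and the root $r$ added in step 4 are explicitly tested against the closest taken node. Maximality and the tiebreak use an exchange argument against an arbitrary optimal $P^*$. Two key observations: (a) both $P$ and $P^*$ contain at most one node within $\lambda/2$ of $r$, since two such would be $<\lambda$ apart through $r$; (b) any non-promoted candidate precludes adding $r$, since $d(r,\cdot)<\lambda/2<\lambda$. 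For each child $i$, induction gives $|P^*\cap T_{r_i}|\leq|P_i|$, and writing
\[ |P| = \sum_i |P_i| - B + [\text{step 3 succeeds}] + [r\in P], \]
where $B$ counts the non-promoted child candidates, the desired $|P|\geq|P^*|$ follows once the slack $\sum_i (|P_i|-|P^*\cap T_{r_i}|)$ is shown to absorb the deficit $B - [\text{step 3 succeeds}] + [r\in P^*] - [r\in P]$.

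The main obstacle is turning this into a rigorous exchange. The plan is a two-step modification of $P^*$. First, for every child $j$ with $|P^*\cap T_{r_j}|=|P_j|$, I would replace $P^*\cap T_{r_j}$ by $P_j$; the inductive tiebreak guarantees that $P_j$'s candidate (if any) is at least as far from $r_j$, hence at least as far from every node outside $T_{r_j}$, as the original candidate of $P^*\cap T_{r_j}$, so validity is preserved and $|P^*|$ does not decrease. Second, if the resulting $P^*$ contains a non-promoted candidate different from the farthest-from-$r$ one chosen by step 3, I would swap to the farthest one; a farther candidate has larger distance to every node outside its subtree, so the swap only helps validity. After these exchanges $P^*$ becomes the union of the $P_i$'s possibly minus unused non-promoted candidates, plus possibly $r$, matching the structural choices made by the algorithm; the identity above then yields $|P|\geq|P^*|$ and the tiebreak simultaneously, closing the induction.
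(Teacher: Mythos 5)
Your time analysis and the invariant you induct on are the same as the paper's, and your validity argument (pairs in different subtrees are separated through $r$, pairs in the same subtree inherit separation from $P_i$) is fine. The gap is in the maximality exchange. In step one you replace $P^*\cap T_{r_j}$ by $P_j$ only for children with $|P^*\cap T_{r_j}|=|P_j|$, yet you then assert that after both steps ``$P^*$ becomes the union of the $P_i$'s possibly minus unused non-promoted candidates, plus possibly $r$.'' This is false whenever some child has $|P^*\cap T_{r_j}|<|P_j|$: those subtrees are left untouched, so the transformed $P^*$ is not a union of the $P_i$'s, and the identity you wrote for $|P|$ does not transfer to it. The second exchange step is also unsound as stated: swapping $P^*$'s non-promoted candidate for the farthest one is justified only against nodes \emph{outside} that candidate's subtree. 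If the farthest candidate lies in a subtree $T_{r_{j_0}}$ that was not replaced in step one, the swapped-in node (which is $P_{j_0}$'s candidate) need not be $\lambda$-separated from the nodes of $P^*\cap T_{r_{j_0}}$, which is a different set than $P_{j_0}$, so the swap can destroy validity. You flagged the slack-versus-deficit accounting as ``the main obstacle'' and then did not carry it out; the exchange as written does not close it.

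The paper sidesteps this bookkeeping entirely with a contradiction argument: assuming a strictly larger (or equal-size but worse-tiebreak) $P'$ exists, it classifies the nodes of $P'$ as certain or candidate with respect to $r$ (each solution has at most one candidate w.r.t.\ $r$ since two would be within $\lambda$ of each other through $r$), and by counting deduces that some child $r_i$ must have either $|P'\cap T_{r_i}|>|P_i|$ or equal size but a strictly farther closest node to $r_i$ -- directly contradicting the inductive invariant on that single child. No global exchange across children is needed. Your approach can likely be completed, but you would need to also replace the strictly-smaller child solutions (dropping conflicting candidates as you go) and then count carefully, at which point you essentially re-derive the paper's case analysis in a more cumbersome form.
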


\begin{proof}
We first analyze the time complexity. To show that the total running time is $O(n)$, we only have to argue that each step of the bottom up computation takes $O(\ell)$ time. In order to perform the algorithm efficiently we preprocess the input tree in linear time, and store the distance of every node to the root of the whole tree. We can now use lowest common ancestor queries~\cite{Bender2000}, to compute any pairwise distance in the tree in $O(1)$ time.

We now show that each step takes $O(\ell)$ time. Step \ref{linear time step 1} can be done in $O(\ell)$ time by storing the subsets computed in each iteration in any data structure that allows merging in constant time. Step \ref{linear time step 2} takes $O(\ell)$ time, since there is at most one candidate node in each subtree rooted at a child of $r$. In step \ref{linear time step 3} we find $x$, the node nearest to $r$ taken thus far, by iterating over the certain node of each of the subsets $P_{1},\ldots,P_{\ell}$, and the candidate nodes we have chosen in step \ref{linear time step 2}. The number of the nodes we need to consider is $O(\ell)$. We then iterate over the remaining candidate nodes, find the one farthest from $r$, and check if its distance to $x$ is greater or equal to $\lambda$. This is all done in $O(\ell)$ time. In Step \ref{linear time step 4} we only need to consider  the node $x$ which we found in Step \ref{linear time step 3}, since if the a candidate node w.r.t $r$ was taken in Step \ref{linear time step 3}, $r$ cannot be included in the solution.

To prove correctness, we will argue by induction that the following invariant holds: for every node $r$ of the tree, the algorithm
produces $P$, a subset of the vertices of the subtree rooted at $r$, s.t. $f(P)\geq\lambda$, $|P|$ is maximal and, if there are multiple
such valid subsets with maximal cardinality, $\min_{u\in P} d(r,u)$ is maximal. 

Consider a node $r$ of the tree and and assume that the invariant holds for all of its children $r_{1},r_{2},\ldots,r_{\ell}$.
We want to show that the invariant also holds for $r$.
Assume for contradiction that $P'$ is a subset of the vertices of $T_r$, s.t. $f(P')\geq\lambda$ and $|P'| > |P|$. Let us look at the two possible cases:
\begin{enumerate}
\item \textbf{\boldmath$P'$ has more certain nodes w.r.t. \boldmath$r$ than \boldmath$P$}: All certain nodes w.r.t. $r$ are in subtrees rooted at children of $r$, and so there must be some child $r_{i}$ of $r$ s.t. $P'_{r_i}$ has more certain nodes w.r.t. $r$ than $P_{r_i}$ (where $P'_{r_i}$ is $P'$ restricted to $T_{r_i}$). This can only be if $|P'_{r_i}| > |P_{r_i}|$, or if the closest node to $r_i$ in $P'_{r_i}$ is farther than the closest node in $P_{r_i}$, and so the invariant does not hold for $r_i$.
\item \textbf{\boldmath$P'$ has the same number of certain nodes w.r.t. \boldmath$r$ as \boldmath$P$, but \boldmath$P'$ has a candidate node w.r.t. \boldmath$r$  and \boldmath$P$ does not}: Denote the candidate node of $P'$ by $u$. We have two possible cases. First, if $u=r$, then there is some node $v \in P$ s.t. $d(u,v)<\lambda$. Assume that $v$ is in the subtree rooted at $r_i$. In this case, either $|P'_{r_i}| \geq |P_{r_i}|$ and the closest node to $r_i$ in $P'_{r_i}$ is farther than the closest node in $P_{r_i}$, and the invariant does not hold for $r_i$, or $|P'_{r_i}|<|P_{r_i}|$ and so there must be some $r_j$ s.t. $|P'_{r_j}|>|P_{r_j}|$, and the invariant does not hold for $r_j$. Second, if $u$ is in one of the subtrees rooted at children of $r$, since all certain nodes w.r.t. $r$ are also in these subtrees, there must be one such subtree, $T_{r_m}$ s.t. $|P'_{r_m}| > |P_{r_m}|$, and so the invariant does not hold for it.
\end{enumerate} 
Thus we have proven that $P$ is of maximal cardinality. Now, assume for contradiction that $P'$ is a subset of the vertices of the subtree rooted at $r$, s.t. $f(P')\geq\lambda$ and $|P'| = |P|$, but the closest node to $r$ in $P$ (denoted by $u$) is closer to $r$ than the closest node to $r$ in $P'$. Assume that $u \neq r$ and denote by $T_{r_i}$ the subtree that $u$ is in. Thus, either $|P'_{r_i}| = |P_{r_i}|$, but the closest vertex to $r_i$ in $P'_{r_i}$ is farther away from $r_i$ than $u$, which means that the invariant does not hold for $T_{r_i}$, or $|P'_{r_i}| < |P_{r_i}|$, and so there exists a subtree $T_{r_j}$, s.t. $|P'_{r_i}| > |P_{r_i}|$ and therefore the invariant does not hold for $r_j$. If $u=r$, then since $|P'| = |P|$, there must some child $r_{i}$ of $r$ s.t. $|P'_{r_i}| > |P_{r_i}|$, and so the invariant does not hold for $r_i$. We have proven that the invariant holds also for $r$ and, consequently, the algorithm is indeed correct.
\end{proof}

\section{An \texorpdfstring{\boldmath$O(n\log\log n)$}{O(nloglogn)} Time Algorithm for the Dispersion Problem}
\label{sublinear f.t.}

To accelerate the linear-time feasibility test described in Section~\ref{linear F.T.}, we will partition the tree into $O(n/b)$ {\em fragments}, each of size at most
$b$. We will preprocess each fragment s.t. we can implement the bottom-up feasibility test in sublinear time by  ``jumping'' over fragments in $O(\log b)$ time instead of $O(b)$.  The preprocessing takes $O(n\log b)$ time (Section~\ref{Pre-Processing Fragments}), and each feasibility test can then be implemented in sublinear $O(\frac{n}{b} \cdot \log b)$ time (Section~\ref{sec:feasibility test}). Using heavy-path decomposition, we design an algorithm for the unweighted dispersion optimization problem whose running time is dominated by the $O(\log^{2}n)$ calls it makes to the sublinear feasibility test  (Section~\ref{sec:nloglogn}). By setting $b=\log^{2}n$ we obtain an $O(n\log\log n)$ time algorithm.

\subsection{Tree partitioning}
We would like to partition the tree into fragments as follows. Each fragment is defined by one or two {\em boundary} nodes: a node $u$, and possibly a descendant of $u$, $v$. The fragment whose boundary nodes are $u$ and $v$ consists of the subtree of $u$ without the subtree of $v$ ($v$ does not belong to the fragment). Thus, each fragment is connected to the rest of the tree only by its boundary nodes. See Figure~\ref{fig:fragment}. 
We call the path from $u$ to $v$ the fragment's \textit{spine}, and $v$'s subtree its \textit{hole}. If the
fragment has only one boundary node, i.e., the fragment consists of a node and all its descendants, we say that there is no hole.
A partition of a tree into $O(n/b)$ such fragments, each of size at most $b$, is called a \emph{good partition}.
This is very similar to the definition of a cluster in top trees~\cite{TopTrees}, which are in turn inspired by topology trees~\cite{Frederickson1985},
but slightly tweaked to our needs.
Note that we can assume that the input tree is binary: given a non-binary tree, we can replace every node of degree $d\geq 3$ with a binary tree on
$d$ leaves. The edges of the binary tree are all of length zero, so at most one node in the tree can be taken.

\begin{figure}[ht]
\begin{center}
\includegraphics[scale=0.7]{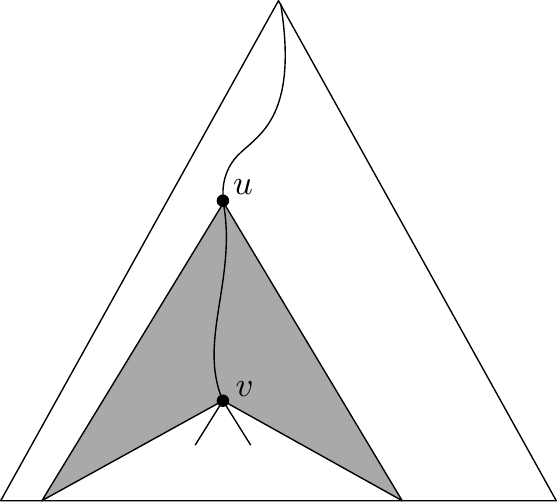}
\end{center}
\caption{A fragment in a good partition consists of the subtree of $u$ without the subtree of $v$.
$u$ is the root, $v$ is the hole, and the path from $u$ to $v$ is the spine of the fragment. \label{fig:fragment}}
\end{figure}

\begin{lemma}
\label{basic partitioning lemma}
For any binary tree on $n$ nodes and a parameter $b$, a good partition of the tree can be found in $O(n)$ time.
\end{lemma}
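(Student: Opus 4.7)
My plan is to construct the partition by a single bottom-up DFS on the binary tree, mimicking the standard cluster construction of topology trees~\cite{Frederickson1985} and top trees~\cite{TopTrees}, with the minor twist that clusters are required to have at most two boundary nodes arranged as root and hole. At each node $v$ I would maintain a pending partial fragment $F_v \subseteq T_v$ that contains $v$ and has the form $T_v \setminus T_w$ for some descendant $w$ of $v$ (or $F_v = T_v$), subject to $|F_v| \le b$ and at most one hole. When the DFS backtracks out of $v$, $F_v$ is handed up to $v$'s parent as a candidate for further extension; whatever pending is left at the root becomes the last committed fragment of the partition.

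At an internal node $v$ with children $c_1, c_2$ holding pending fragments $F_1, F_2$, I would first try to extend everything by setting $F_v = F_1 \cup F_2 \cup \{v\}$. This merge is admissible exactly when $|F_1|+|F_2|+1 \le b$ and at most one of $F_1, F_2$ has a hole. Otherwise I would commit some of $F_1, F_2$ (and in the worst case also the singleton $\{v\}$) as complete fragments of the partition. The specific choice is dictated by preserving the invariant on the new pending: close off the larger child's pending when neither carries a hole but the size overflows (the smaller plus $\{v\}$ becomes the new pending, acquiring a fresh virtual hole at the closed-off child); close off the one carrying the hole when only one does (to prevent a second hole from entering the new pending); and if both children have holes, close off $F_1$, $F_2$, and $\{v\}$ as three separate fragments and propagate an empty pending upward, which $v$'s parent then treats as a virtual hole at $v$.

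The hard part will be bounding the number of committed fragments by $\Oh(n/b)$, which I plan to show by amortization. Commits triggered purely by a size violation close off a fragment of size at least $\lceil b/2 \rceil$, since if both $F_1$ and $F_2$ had size strictly below $b/2$ the sum $|F_1|+|F_2|+1$ could not exceed $b$; these commits contribute at most $2n/b$ fragments. The remaining commits are triggered by hole conflicts and each consumes at least one hole that must have been created by an earlier commit. A token-style argument---one token per hole, issued by its generating commit and consumed when the pending carrying it is closed off---then bounds the hole-triggered commits in terms of the size-triggered ones up to constant factors, giving $\Oh(n/b)$ total fragments, in line with the standard analyses of topology trees and top trees. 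Finally, the $\Oh(n)$ running time is immediate: each node is touched $\Oh(1)$ times during the DFS, and each pending fragment can be represented by just its root, its potential hole, and its size, so merges and commitments take $\Oh(1)$ time each.
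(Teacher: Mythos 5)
You take a genuinely different route from the paper. The paper's proof classifies nodes as \emph{large} (subtree of size $\ge b$) or \emph{small}, and works globally on the induced tree $T_L$ of large nodes: subtrees hanging below leaves of $T_L$, branching nodes of $T_L$, and unary chains of $T_L$ are handled separately, each contributing $O(n/b)$ fragments by direct counting. Your proposal instead builds fragments by a local bottom-up greedy with pending clusters, in the spirit of bottom-up topology-tree/top-tree construction. This is a reasonable alternative in principle, but as written it has two concrete gaps.

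First, your merge rule does not preserve the invariant $|F_v|\le b$. When neither $F_1$ nor $F_2$ carries a hole but both are large (say each has size $b$), you commit the larger and keep the smaller plus $\{v\}$ as the new pending, which has size $b+1>b$. You then need to commit $F_2$ as well, and then $\{v\}$ has two detached children, so it must also be committed with an empty pending propagated upward --- a cascade your case analysis omits. This is fixable, but it matters because it feeds more singleton commits into the count you need to bound.

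Second, and more seriously, the amortization is incomplete. You split commits into ``size-triggered'' (which you claim close a fragment of size $\ge\lceil b/2\rceil$) and ``hole-triggered'' (bounded by a token argument). But there is a third kind: size overflow with \emph{exactly one} hole, where you commit the hole-carrier. That committed fragment can be arbitrarily small (the $\ge b/2$ bound only applies to the one you \emph{don't} commit), so it is not covered by the $2n/b$ size-based count. It is also hole-neutral (consumes one hole, creates one at the closed-off child), so the one-token-per-hole argument gives no bound on it either: token balance only yields that the number of two-hole conflicts is at most the number of no-hole overflows, leaving the one-hole overflows unconstrained. The gap can be closed --- when you commit a small hole-carrier, the surviving pending has size $>b/2$ \emph{and} a hole, and since it can only grow until it is itself committed, its eventual commit is of size $\ge b/2$; charging the small commit to that subsequent large commit, and noting that no further small hole-carrier commit can occur on that lineage before the large one, gives a one-to-one charging. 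But this step is the crux and is not in your argument. The paper's global large/small decomposition avoids this bookkeeping entirely, which is why it is the cleaner proof.
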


\begin{proof}
We prove the lemma by construction. Call a node \textit{large} if the size of its subtree is at least $b$
and \textit{small} otherwise. Consider the tree $T_L$ induced by the large nodes of the original tree. 
For each leaf $u\in T_{L}$, we make the subtree rooted at each of its children in the original tree a new fragment with no holes.
Each leaf of leaf of $T_{L}$ is the root of a subtree of size at least $b$ in the original tree, and these subtrees
are all disjoint, so we have at most $n/b$ leaves in $T_{L}$. Each of them creates up to two fragments
since the tree is binary, and each of these fragments is of size at most $b$ by definition.
The natural next step would be to keep cutting off maximal fragments of size $b$ from the remaining
part of the tree. This does not quite work, as we might create fragments with more than two boundary
nodes with such a method.
Therefore, the next step is to consider every branching node in $T_{L}$ instead, and make it a fragment
consisting of just one node. This also creates up to $n/b$ fragments, since in any tree the number of
branching nodes is at most the number of leaves.
Ignoring the already created fragments, we are left with large nodes that form unary chains in $T_{L}$.
Each of these nodes might also have an off-chain child that is a small node. We have $O(n/b)$ of these chains
(because each of them corresponds to an edge of a binary tree on at most $n/b$ leaves).
We scan each of these chains bottom-up and greedily cut them into fragments of size at most $b$.
Denoting the size of the $i$-th chain by $b_{i}$, and the number of chains by $k$, the number of fragments created in this phase is
bounded by $$\sum_{i=1}^{k} \left\lceil \frac{b_i}{b} \right\rceil \leq n/b+k=O(n/b).$$
In total we have created $O(n/b)$ fragments, and each of them is of size at most $b$.
The whole construction can be implemented by scanning the tree twice bottom-up in $O(n)$ time.
\end{proof}

\subsection{The preprocessing} \label{Pre-Processing Fragments}

Recall that the goal in the optimization problem is to find the largest feasible $\lambda^{*}$. Such $\lambda^{*}$
is a distance between an unknown pair of vertices in the tree. The first goal of the preprocessing step is to eliminate many possible pairwise distances, so that we can identify a small interval $[\lambda_1,\lambda_2)$ that contains $\lambda^*$. We want this interval to be sufficiently small so that for (almost) every fragment $F$, handling $F$ during the bottom up feasibility test for any value $\lambda$ in $[\lambda_1,\lambda_2)$ is the same. Observe that the feasibility test in Section~\ref{linear F.T.} for value $\lambda$ only compares distances to $\lambda$ and to $\lambda/2$. We therefore call a fragment $F$ \emph{inactive} if for any two nodes $u_1,u_2\in F$ the following two conditions hold: (1) $d(u_{1},u_{2})\leq \lambda_{1}$ or $d(u_{1},u_{2})\geq\lambda_{2}$, and (2) $d(u_{1},u_{2})\leq \frac{\lambda_{1}}{2}$ or $d(u_{1},u_{2})\geq \frac{\lambda_{2}}{2}$. For an inactive fragment $F$, all the comparisons performed by the feasibility test for any $\lambda \in (\lambda_1,\lambda_2)$ do not depend on the particular value of $\lambda$ in the interval. Therefore, once we find an interval $[\lambda_1,\lambda_2)$ for which (almost) all fragments are inactive, we can precompute, for each inactive fragment $F$, information that will enable us to process $F$ in $O(\log b)$ time during any subsequent feasibility test with $\lambda \in (\lambda_1,\lambda_2)$.

The first goal of the preprocessing step is therefore to find a small enough interval $[\lambda_1,\lambda_2)$. We call a matrix sorted if the elements of every row and every column are sorted (s.t. all the rows are monotone increasing or all the rows are monotone decreasing, and the same holds for the columns).
For each fragment $F$, we construct an implicit representation of $O(b)$ sorted matrices of total side length $O(b\log b)$ s.t. for every two nodes $u_1,u_2$ in $F$, $d(u_1,u_2)$ (and also $2d(u_1,u_2)$) is an entry in some matrix. This is done by using the following lemma. 
\begin{lemma}
Given a tree $T$ on $b$ nodes, we can construct in $O(b\log b)$ time an implicit representation of $O(b)$ sorted matrices of total
side length $O(b\log b)$ such that, for any $u,v\in T$, $d(u,v)$ is an entry in some matrix.
\end{lemma}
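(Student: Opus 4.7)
The plan is to use a centroid decomposition of $T$. Recall that a centroid of an $m$-node tree is a node whose removal leaves components each of size at most $m/2$, so recursively picking centroids produces a hierarchy of depth $O(\log b)$. For each centroid $c$ in this hierarchy let $S_c$ denote the subproblem in which $c$ was picked, of size $m_c$; a standard level-by-level argument gives $\sum_c m_c = O(b\log b)$.

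For each centroid $c$ I would first compute $d(c,v)$ for every $v\in S_c$ by a single DFS rooted at $c$, in $O(m_c)$ time, and then sort these distances into an array $A_c$ of length $m_c$. The matrix $M_c$ associated with $c$ has side length $m_c$ and is represented implicitly as $M_c[i][j] = A_c[i] + A_c[j]$: any entry is produced in $O(1)$ time from the indices and a pointer to $A_c$, and since $A_c$ is sorted, $M_c$ is sorted in both rows and columns.

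For correctness, every pair $u,v\in T$ has a unique lowest common ancestor $c$ in the centroid decomposition tree, and by the defining property of the decomposition the unique path from $u$ to $v$ in $T$ passes through $c$; hence $d(u,v)=d(u,c)+d(c,v)=M_c[i][j]$, where $i$ and $j$ are the ranks of $d(u,c)$ and $d(c,v)$ in $A_c$. The number of matrices is at most the number of nodes, that is $O(b)$, and their total side length is $\sum_c m_c = O(b\log b)$.

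The hard part will be keeping the construction time down to $O(b\log b)$ rather than the $O(b\log^2 b)$ one would get by sorting each $A_c$ from scratch. The DFSs together already cost $\sum_c O(m_c)=O(b\log b)$, so the whole difficulty is in producing the sorted arrays. I would address this by building the $A_c$'s level by level and sharing sorted information across levels by means of a merge-friendly balanced search tree that merges two sorted lists of sizes $x\le y$ in $O(x\log(y/x))$ time, in the spirit of the polyline data structure developed later in the paper for the weighted case. With such a tool, at each level of the centroid hierarchy the required sorted distance lists can be assembled from sorted lists produced one level below at amortized total cost $O(b)$ per level, giving $O(b\log b)$ overall.
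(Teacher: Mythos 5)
Your high-level framework matches the paper exactly: centroid decomposition; one sorted array $A_c$ per centroid $c$ holding distances from the nodes of $S_c$ to $c$; the implicit matrix $M_c[i][j]=A_c[i]+A_c[j]$; correctness via the fact that the path from $u$ to $v$ passes through the centroid that is the LCA of $u$ and $v$ in the decomposition tree; $O(b)$ matrices of total side length $\sum_c m_c=O(b\log b)$. You also correctly identify that the only nontrivial issue is producing the sorted arrays $A_c$ in $O(b\log b)$ total rather than $O(b\log^2 b)$.

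However, your resolution of that issue is where the actual content of the paper's proof lies, and your sketch does not close the gap. You propose to ``assemble the sorted lists level by level from the sorted lists one level below'' using an $O(x\log(y/x))$-time merge. But the lists one level below are sorted by distance to the \emph{children centroids}, not to $c$, and they cannot all be converted into lists sorted by distance to $c$ by a constant shift. Concretely, if $c'$ is a child centroid inside piece $P$, then two of the three sub-pieces of $P$ have all their paths to $c$ passing through $c'$, so for them $d(v,c)=d(v,c')+d(c',c)$ and the sorted order is preserved under a constant shift. The third sub-piece (the paper calls it the ``problematic'' piece) has $d(v,c)$ that does \emph{not} equal $d(v,c')$ plus a constant, so its child-level sorted list is useless for producing $A_c$ directly, no matter how cleverly you merge. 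A fancy merge-friendly tree does not fix this: the lists you would hand it are not sorted by the right key. The paper's proof is devoted precisely to this obstacle: it recurses one more level inside the problematic piece, observes that two of its sub-pieces now are non-problematic relative to $c$ (their paths to $c$ go through the inner-inner centroid), and iterates; the problematic remainder halves each time, yielding $O(\log s)$ lists of geometrically shrinking sizes $s/2, s/4,\dots$ that can be merged in $O(s)$ time. That recursion, or some substitute for it, is the missing step in your proposal. (Once it is in place, plain linear-time merges suffice; the $O(x\log(y/x))$-time merge tool is not needed here.)
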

\begin{proof}
To construct the matrices we apply the standard centroid decomposition. A node $v\in T$ is a centroid if every connected component of
$T\setminus\{v\}$ consists of at most $\frac{|T|}{2}$ nodes. The centroid decomposition of $T$ is defined recursively by first choosing a centroid
$v\in T$ and then recursing on every connected component of $T\setminus\{v\}$, which we call \emph{pieces}.
Overall, this process takes $O(b\log b)$ time. Since we assume that the input tree is binary, the centroid in every recursive call splits the tree
into three pieces. Now, we run the following bottom-up computation: assuming that we are given the three pieces, and for each of
them a sorted list of the distances to their centroid, we would like to compute a sorted list of distances of the nodes in all three pieces
to the centroid used to separate them. The difficulty is that we would like to produce this sorted list in linear time.

\begin{figure}[ht]
\begin{center}
\includegraphics[scale=1]{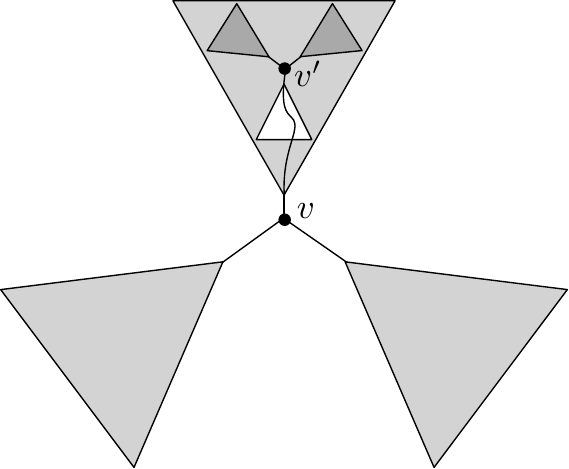}
\end{center}
\caption{A single step in the centroid decomposition. Removing $v$ splits $T$ into three pieces, s.t. none of them has more than $\frac{|T|}{2}$ nodes. The white piece is problematic, since the paths from its nodes to $v$ do not pass through the inner centroid $v'$. \label{fig:problematic}}
\end{figure}

Consider one of the three already processed pieces obtained by removing the outer centroid. It contains three smaller pieces of its own,
that have been obtained by removing the inner centroid.
For two of these smaller pieces, it holds that the path from every node inside to the outer centroid passes through the inner centroid.
We call the third piece, for which this does not hold, a \textit{problematic} piece. See Figure~\ref{fig:problematic}. For the two pieces that are not problematic, we can 
increase all entries of their lists by the distance from the inner centroid to the outer centroid, and then merge two sorted lists in linear time.
Now consider the problematic piece, where paths to the outer centroid do not go through the inner centroid. Notice that if we go one step deeper
in the recursion, the problematic piece consists of two pieces that are not problematic, as the paths from all their nodes to the outer centroid
pass through the inner centroid, and the remaining even smaller problematic piece. For the two non-problematic pieces we can again
increase all entries of their lists and then merge two sorted lists in linear time, and repeat the reasoning on the problematic piece.
If the initial problematic piece was of size $s$, after $O(\log s)$ iterations we obtain that a sorted list of distances to the centroid can be
obtained by merging sorted lists of length $s,s/2,s/4,\ldots$, which can be done in $O(s)$ time. At every level of recursion, 
the total size of all the pieces is at most $b$, and there are $O(\log b)$ levels of recursion, so the total time spend on constructing
the sorted lists is $O(b\log b)$.

Finally, for every piece we define a sorted matrix, that is, a matrix where every row is sorted and every column is sorted. If the sorted list of all distances of the nodes in the piece to the
centroid is $L[1..s]$, the entries of the matrix are $M[i,j]=L[i]+L[j]$. To represent $M$, we only have to store $L$, and then are able to
retrieve any $M[i,j]$ in $O(1)$ time. Note that some entries of $M$ do not really represent distances between two nodes, but this is irrelevant.
To bound the number of matrices, observe that the total number of pieces is $b$, because after constructing a piece we remove the centroid.
To bound the total side length, observe that the side length of $M$ is bounded by the size of the corresponding piece, and the total size
of all pieces at the same level of recursion is at most $b$, so indeed the total side length is $O(b\log b)$.
\end{proof}

Having constructed the representation of these matrices, we repeatedly choose an entry of a matrix and run a feasibility test with its value.   
Depending on the outcome, we then appropriately shrink the current interval $[\lambda_{1},\lambda_{2})$ and discard this entry.
Because the matrices are sorted, running a single feasibility test can actually allow us to discard multiple entries in the same matrix
(and, possibly, also entries in some other matrices). The following theorem by Frederickson shows how to exploit this to discard most
of the entries with very few feasibility tests.

\begin{theorem}[\cite{Frederickson1991}]\label{Frederickson's theorem}
Let  ${M_1, M_2, . . . , M_N}$ be a collection of sorted matrices in which matrix $M_j$ is of dimension $m_j \times n_j$, $m_j \leq n_j$, and $\sum_{j=1}^{N} m_j = m$.
Let $p$ be nonnegative. The number of feasibility tests needed to discard all but at most $p$ of the elements is $O(\max \lbrace \log(\max_{j} \lbrace n_j \rbrace), \log(\frac{m}{p+1}) \rbrace)$, and the total running time exclusive of the feasibility tests is $O(\sum_{j=1}^{N} m_j \cdot \log (2n_j/m_j))$.
\end{theorem}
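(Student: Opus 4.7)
The plan is to adapt the Frederickson--Johnson sorted-matrix search so that a single feasibility test on a chosen value $x$ simultaneously resolves entire staircase-shaped regions in every matrix at once: since each $M_j$ has rows and columns monotone in the same direction, the outcome on $x$ immediately classifies an upper-left (or lower-right) corner of every matrix as entirely feasible (resp. entirely infeasible), and hence discardable. The algorithmic task is then to schedule tests so that this batch pruning is efficient, and the analytic task is to amortize the non-test work against $\sum m_j\log(2n_j/m_j)$.

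First I would implicitly partition each rectangular $m_j\times n_j$ matrix into $\lceil n_j/m_j\rceil$ sorted square submatrices of side length $m_j$, each represented in $\Oh(1)$ space by its offset inside $M_j$. After this preparation the entire collection consists only of square sorted submatrices, though with a larger count. I would then run a halving loop: at each iteration, extract a constant number of representative entries from each active submatrix (e.g.\ a centered or anti-diagonal entry), compute their weighted median, weighted by submatrix side length, in linear time in the number of representatives, run one feasibility test on this median, and use the outcome together with the staircase property to discard at least one quadrant of every submatrix; each surviving submatrix is then split into its remaining quadrants of half the side. After $\Oh(\log\max_j n_j)$ iterations the surviving submatrices are singletons, at which point $\Oh(\log(m/(p+1)))$ additional weighted-median feasibility tests on these singletons drive the surviving candidate count down to $p$.

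The test-count bound is immediate: each iteration uses one test, so $\Oh(\max\{\log\max_j n_j,\log(m/(p+1))\})$ tests suffice in total. The work per iteration, excluding feasibility tests, is proportional to the total side length of currently active submatrices. The crux is to show that the aggregate non-test work charged to a single input matrix $M_j$ sums to $\Oh(m_j\log(2n_j/m_j))$: the $\log$ factor encodes both the $\lceil n_j/m_j\rceil$ initial squares and the $\Oh(\log m_j)$ halving levels.

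The main obstacle is pinning down this amortized charge. The halving step can leave up to two surviving quadrants per submatrix, so the raw count of active submatrices can grow by up to a factor of two per iteration while each has half the side length; proving that this balance yields the claimed total requires a potential-based argument. I would define the potential to be the sum of side lengths of active submatrices, show that the combined effect of a halving iteration (quadrisect, classify by median, discard at least half) keeps the potential bounded up to a constant factor between iterations once the weighted-median test has been applied, and extract the $\log(2n_j/m_j)$ factor from the depth to which a single input matrix is subdivided before any of its descendant squares are fully pruned. This mirrors Frederickson's original accounting but must be stated with care so that the contribution of square submatrices created from the same $M_j$ is aggregated with a single logarithmic factor rather than separately.
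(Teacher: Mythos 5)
The paper does not give a proof of this theorem; it is cited directly from Frederickson~\cite{Frederickson1991} and used as a black box, so there is no in-paper argument to compare against. Your reconstruction can still be assessed on its own terms, and its outline (reduce rectangles to squares, choose test values as weighted medians weighted by side length, amortize the non-test work against $\sum_j m_j\log(2n_j/m_j)$ via a potential on surviving side lengths) does match the general shape of Frederickson's argument.

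However, there is a concrete error in the core halving step. You claim that a single feasibility test on one weighted median ``discard[s] at least one quadrant of \emph{every} submatrix.'' That is false. If the tested value is $x$ and the outcome is $\lambda^*\ge x$, you may discard the entries smaller than $x$. In a square whose center element is at most $x$, that set does cover the upper-left quadrant; but in a square whose center element exceeds $x$, the set of entries below $x$ may be only a small corner strictly inside that quadrant, so no full quadrant is eliminated. The symmetric failure occurs when the test returns $\lambda^*<x$. What the weighted-median choice actually guarantees is weaker: whichever way the test resolves, at least half of the total side-length \emph{weight} of active squares loses a quadrant, while the other half may make no progress at all. Frederickson's scheme copes with this by first quadrisecting every active cell and then spending a constant number of tests per round, on weighted medians of both the minimum-corner and maximum-corner values of the new cells, to prune the cell population back down; the accounting controls the \emph{aggregate} surviving weight rather than forcing per-cell progress. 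Because your termination bound of $O(\log\max_j n_j)$ rounds and your amortized charge of $O(m_j\log(2n_j/m_j))$ per input matrix both rest on uniform quadrisection of every cell after every test, the argument as stated does not establish either bound and needs to be reorganized around the aggregate-weight invariant.
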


Setting $m=b \log b \cdot \frac{n}{b} = n \log b$ and $p=n/b^2$, the theorem implies that we can use $O(\log b)$ calls to the linear time feasibility test and discard all but $n/b^2$ elements of the matrices. Therefore, all but at most $n/b^{2}$ fragments are inactive.

The second goal of the preprocessing step is to
compute information for each inactive fragment that will allow us to later ``jump'' over it in $O(\log b)$ time when running the feasibility test.  We next describe this computation. All future queries to the feasibility test will be given some number in the interval $(\lambda_1,\lambda_2)$ as the parameter $\lambda$ (since for any other value of $\lambda$ we already know the answer). For now we choose $\lambda$ arbitrarily in
$(\lambda_1,\lambda_2)$. This is done just so that we have a concrete value of $\lambda$ to work with.
\begin{noindlist}
\item\textbf{Reduce the fragment to a caterpillar:}
a fragment consists of the spine and the subtrees hanging off the spine. 
We run our linear-time feasibility test on the subtrees hanging off the spine, and obtain
the candidate and the certain node for each of them. The fragment can now be reduced to a caterpillar with at most two leaves attached to each
spine node: a candidate node and a certain node.
\item\label{removing certain nodes}
\textbf{Find candidate nodes that cannot be taken into the solution:}
for each candidate node we find its nearest certain node. Then, we compare their distance to $\lambda$ and remove the
candidate node if it cannot be taken. To find the nearest certain node, we first scan all nodes bottom-up (according to the natural order
on the spine nodes they are attached to) and compute for each of them the nearest certain node below it. Then, we repeat the scan
in the other direction to compute the nearest certain node above. This gives us, for every candidate node, the nearest certain node above
and below. We delete all candidate nodes for which one of these distances is smaller than $\lambda$.
We store the certain node nearest to the root, the certain node nearest to the hole and the total number of certain nodes,
and from now on ignore certain nodes and consider only the remaining candidate nodes.
\item\label{making distances from the root monotone}
\textbf{Prune leaves to make their distances to the root non-decreasing:}
let the $i$-th leaf, $u_{i}$, be connected with an edge of length $y_{i}$ to a spine node at distance $x_{i}$ from the root,
and order the leaves so that $x_{1}<x_{2}<\ldots<x_{s}$. See Figure~\ref{fig:pruning}.
Note that $y_{i}<\frac{\lambda}{2}$, as otherwise $u_{i}$ would be a certain node.
Suppose that $u_{i-1}$ is farther from the root than $u_i$ (i.e.,
$x_{i-1}+y_{i-1} > x_i+y_i$), then:
$d(u_{i},u_{i-1}) = x_i-x_{i-1}+y_i+y_{i-1} = x_{i} + y_{i} - x_{i-1} + y_{i-1} < 2y_{i-1} < \lambda.$
Therefore any valid solution cannot contain both $u_{i}$ and $u_{i-1}$. We claim that if there is an optimal solution which contains
$u_{i}$, replacing $u_{i}$ with $u_{i-1}$ also produces an optimal solution. To prove this, it is enough to argue that
$u_{i-1}$ is farther away from any node above it than $u_i$, and $u_i$ is closer to any node below it than $u_{i-1}$.
Consider a node $u_{j}$ that is above $u_{i-1}$ (so $j<i-1$), then:
$d(u_j,u_{i-1}) - d(u_j,u_{i}) = y_{i-1}-(x_i-x_{i-1})-y_i = x_{i-1}+y_{i-1}-(x_i+y_i) > 0.$
Now consider a node $u_{j}$ that is below $u_{i}$ (so $j>i$), then:
$d(u_j,u_{i-1}) - d(u_j,u_{i}) = y_{i-1}+(x_i-x_{i-1})-y_i > 2(x_i-x_{i-1}) > 0.$
So in fact, we can remove the $i$-th leaf from the caterpillar if $x_{i-1}+y_{i-1} > x_i+y_i$.
To check this condition efficiently, we scan the caterpillar from top to bottom while maintaining the most recently processed non-removed leaf.
This takes linear time in the number of candidate nodes and ensures that the distances of the
remaining leaves from the root are non-decreasing.
\item\label{making distances from the hole monotone}
\textbf{Prune leaves to make their distances to the hole non-increasing:}
this is done as in the previous step, except we scan in the other direction.
\item\label{precompute for any candidate node}
\textbf{Preprocess for any candidate and certain node with respect to the hole:}
we call $u_{1},u_{2},\ldots,u_{i}$ a {\em prefix} of the caterpillar and, similarly, $u_{i+1},u_{i+2}, \ldots,u_{s}$ a {\em suffix}.
For every possible prefix, we would like to precompute the result of running the linear-time feasibility
test on that prefix. In Section~\ref{sec:feasibility test} we will show that, in fact, this is enough to efficiently
simulate running the feasibility test on the whole subtree rooted at $r$ if we know the candidate node and the certain node w.r.t. the hole. Consider running the feasibility test on $u_{1},u_{2},\ldots,u_{i}$.
Recall that its goal is to choose as many nodes as possible, and in case of a tie to maximize the distance of the
nearest chosen node to $r$. Due to distances of the leaves to $r$ being non-decreasing, it is clear that
$u_{i}$ should be chosen. Then, consider the largest $i'<i$ such that $d(u_{i'},u_{i})\geq \lambda$.
Due to distances of the leaves to the hole
being non-decreasing, nodes $u_{i'+1},u_{i'+2},\ldots,u_{i-1}$ cannot be chosen and furthermore $d(u_{j},u_{i})\geq \lambda$
for any $j=1,2,\ldots,i'$. Therefore, to continue the simulation we should repeat the reasoning for $u_{1},u_{2},\ldots,u_{i'}$.
This suggests the following implementation: scan the caterpillar from top to bottom and store, for every prefix $u_{1},u_{2},\ldots,u_{i}$,
the number of chosen nodes, the certain node and the candidate node. While scanning we maintain $i'$ in amortized constant time.
After increasing $i$, we only have to keep increasing $i'$ as long as $d(u_{i},u_{i'}) \ge \lambda$.
To store the information for the current prefix, copy the computed information for $u_{1},u_{2},\ldots,u_{i'}$
and increase the number of chosen nodes by one. Then, if the certain node is set to NULL, we set it to be $u_{i}$. If there is no $u_{i'}$, and $u_i$ is the top-most chosen candidate, we need to set it to be the candidate (if $d(r,u_i) < \frac{\lambda}{2}$) or the certain node otherwise.
\end{noindlist}

\begin{figure}[ht]
\begin{center}
\includegraphics[scale=0.45]{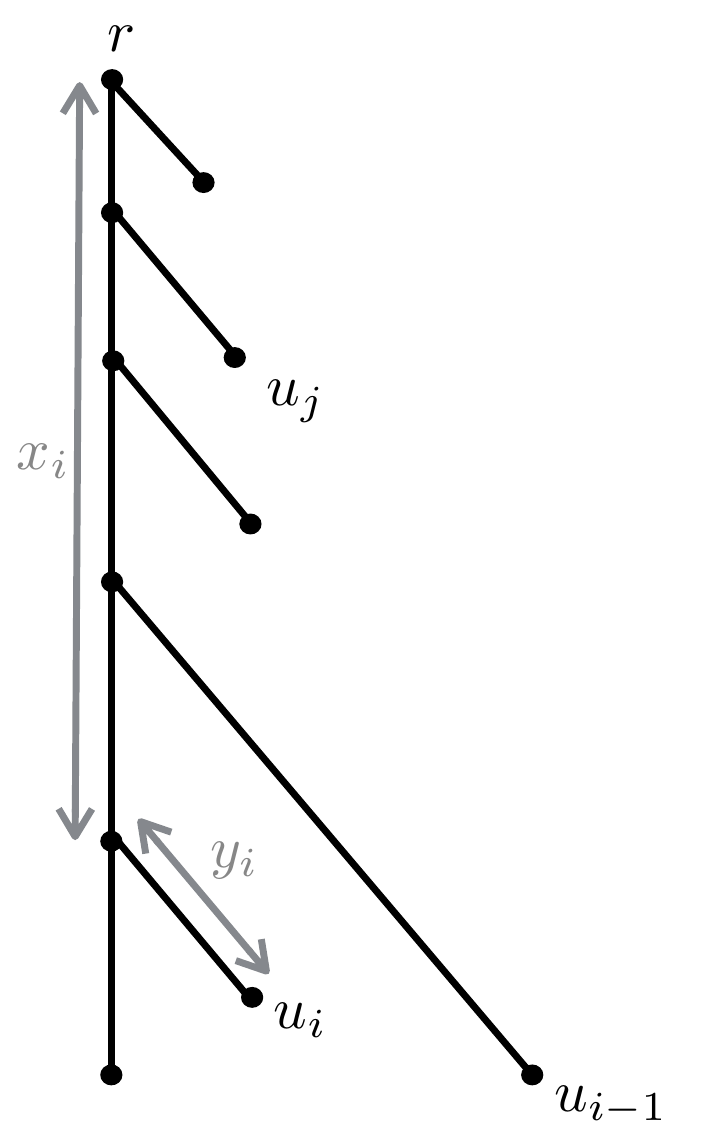}
\end{center}
\caption{If $d(u_i,r)<d(u_{i-1},r)$ then  we can remove $u_{i}$ since (1) an optimal solution cannot contain both $u_{i}$ and $u_{i-1}$ (the distance between them is too small), and (2) if an optimal solution contains $u_{i}$ then it can be replaced with $u_{i-1}$ (every leaf $u_j$ is farther from $u_{i-1}$ than from $u_i$). \label{fig:pruning}}
\end{figure}

\subsection{The feasibility test}
\label{sec:feasibility test}

The sublinear feasibility test for a value $\lambda \in (\lambda_1,\lambda_2)$ processes the tree bottom-up. For every fragment with root $r$, we would like to simulate running the
linear-time feasibility test on the subtree rooted at $r$ to compute: the number of chosen nodes, the candidate node, and the certain
node. We assume that we already have such information for the fragment rooted at the hole of the current fragment.
If the current fragment is active, we process it naively in $O(b)$ time
using the linear-time feasibility test. If it is inactive, we process it (jump over it) in $O(\log b)$ time.
This can be seen as, roughly speaking, attaching the hole as another spine node to the corresponding caterpillar and executing steps
(\ref{removing certain nodes})-(\ref{precompute for any candidate node}).

We start by considering the case where there is no candidate node w.r.t. the hole. Let $v$ be the certain node w.r.t. the hole.
Because distances of the leaves from the hole are non-increasing,  we can compute the prefix of the caterpillar consisting of
leaves that can be chosen, by binary searching for the largest $i$ such that $d(v,u_{i})\geq \lambda$. Then, we retrieve and return the result
stored for $u_{1},u_{2},\ldots,u_{i}$ (after increasing the number of chosen nodes and, if the certain node is set to NULL, updating it to $v$).

Now consider the case where there is a candidate node w.r.t. the hole, and denote it by $u$. We start with binary searching for $i$ as explained above. As before, $i$ is the largest possible index s.t. $d(v,u_{i})\geq \lambda$, where $v$ is the certain node w.r.t the hole (if it exists). Then,
we check if the distance between $u$ and the certain node nearest to the hole is smaller than $\lambda$ or
$d(u_{i},r)>d(u,r)$, and if so return the result stored for $u_{1},u_{2},\ldots,u_{i}$. Then, again because distances of the leaves
to the hole are non-increasing, we can binary search for the largest $i'\leq i$ such that $d(u_{i'},u)\geq \lambda$
(note that this also takes care of pruning leaves $u_{k}$ that are closer to the hole than $u$).
Finally, we retrieve and return the result stored for $u_{1},u_{2},\ldots,u_{i'}$ (after increasing the number of chosen nodes appropriately and possibly updating the candidate and the certain node).

We process every inactive fragment in $O(\log b)$ time and every active fragment in $O(b)$ time, so the total time for a feasibility test is
$O(\frac{n}{b} \cdot \log b) + O(\frac{n}{b^2} \cdot b) = O(\frac{n}{b}\cdot \log b)$.

\subsection{The algorithm for the optimization problem} \label{sec:nloglogn}

The general idea is to use a heavy path decomposition to solve the optimization  problem
with $O(\log^{2}n)$ feasibility tests. The {\em heavy edge} of a non-leaf node of the tree is the edge leading to the child
with the largest number of descendants. The heavy edges define a decomposition of the nodes into heavy paths. A heavy
path $p$ starts with a head $\head(p)$ and ends with a tail $\tail(p)$ such that $\tail(p)$ is a descendant of $\head(p)$,
and its depth is the number of heavy paths $p'$ s.t. $\head(p')$ is an ancestor of $\head(p)$. The depth 
is always $O(\log n)$~\cite{Sleator1983}.

We process all heavy paths at the same depth together while maintaining, as before, an interval $[\lambda_{1},\lambda_{2})$ such that
$\lambda_{1}$ is feasible and $\lambda_{2}$ is not, which implies that the sought $\lambda^{*}$ belongs to the interval.
The goal of processing the heavy paths at depth $d$ is to further shrink the interval so that, for any heavy path $p$ at depth $d$,
the result of running the feasibility
test on any subtree rooted at $\head(p)$ is the same for any $\lambda\in(\lambda_{1},\lambda_{2})$ and therefore can
be already determined. We start with the heavy paths of maximal depth and terminate
with $\lambda^{*}=\lambda_{1}$ after having determined the result of running the feasibility test on the whole tree.

Let $n_{d}$ denote the total size of all heavy paths at depth $d$. For every such heavy path we construct a caterpillar by replacing
any subtree that hangs off by the certain and the candidate node (this is possible, because we have already determined the result of
running the feasibility test on that subtree). To account for the possibility of including a node of the heavy path in the solution,
we attach an artificial leaf connected with a zero-length edge to every such node.
The caterpillar is then pruned similarly to steps (\ref{removing certain nodes})-(\ref{making distances from the hole monotone})
from Section~\ref{Pre-Processing Fragments}, except that having found the nearest certain node for every candidate
node we cannot simply compare their distance to $\lambda$. Unlike the situation in Section~\ref{Pre-Processing Fragments}, where any pairwise distance in the caterpillar is also a distance in an inactive fragment, and is either smaller or equal to $\lambda_1$ or larger or equal to $\lambda_2$, now such a distance could be within the interval $(\lambda_1,\lambda_2)$. Therefore we create an $1\times 1$ matrix storing the relevant
distance for every candidate node. Then, we apply Theorem~\ref{Frederickson's theorem} with $p=0$ to the obtained set of
$O(n_{d})$ matrices of dimension $1\times 1$. This allows us to determine, using only $O(\log n)$ feasibility tests and
$O(n_{d})$ time exclusive of the feasibility tests, which distances are larger than $\lambda^{*}$, so that we can prune
the caterpillars and work only with the remaining candidate nodes. Then, for every caterpillar we create a row- and column-sorted matrix
storing pairwise distance between its leaves. By applying Theorem~\ref{Frederickson's theorem} with $p=0$ on the obtained set of
square matrices of total side length $O(n_{d})$ we can determine, with $O(\log n)$ feasibility tests and $O(n_{d})$ time
exclusive of the feasibility tests, which distances are larger than $\lambda^{*}$. This allows us to run the bottom-up
procedure described in Section~\ref{linear F.T.} to produce the candidate and the certain node for every subtree rooted
at $\head(p)$, where $p$ is a heavy path at depth $d$.

All in all, for every depth $d$ we spend $O(n_{d})$ time and execute $O(\log n)$ feasibility tests. Summing over all depths,
this is $O(n)$ plus $O(\log^{2}n)$ calls to the feasibility test. Setting $b=\log^{2}n$, we have that the preprocessing time is $O(n \log b) = O(n \log 
\log n)$, and our sublinear feasibility test runs in $O(\frac{n}{b} \cdot \log b) = O(\frac{n}{\log ^2 n} \cdot \log \log n)$, which implies $O(n+\frac{n}{\log^{2}n}\cdot\log\log n\cdot \log^{2}n)=O(n\log\log n)$ time for solving the dispersion problem.

\section{A Linear Time Algorithm for the Dispersion problem}\label{linear algorithm for the dispersion optimization problem}

We first present an $O(n \log^* n)$ time algorithm that uses the procedures of the $O(n \log \log n)$ time algorithm (from Section~\ref{sublinear f.t.}) iteratively, and then move to the more complex linear time algorithm.

\subsection{An \texorpdfstring{\boldmath$O(n\log^{*}n)$}{O(nlog^*n)} time algorithm for the dispersion optimization problem} \label{sectionlog*}
The high level idea of the $O(n \log \log n)$ algorithm was dividing the input tree into fragments, and preprocessing them using the linear feasibility test (or test for short) to get a sublinear feasibility test. In order to get the complexity down to $O(n \log^*n)$ we will use a similar process, iteratively, with growing fragments size, improving in each iteration the performance of our test.
We start with $O(n/c^{3})$ fragments of size at most $c^3$ for some constant $c$.
We pay $O(n)$ time for preprocessing and get an $O(\frac{n}{c^3} \cdot \log c)$ time test.
We use this test to do the same preprocessing again, this time with $O(n/(2^{c})^{3})$ fragments of size at most $(2^c)^3$.
We now construct an $O(\frac{n}{2^{3c}} \cdot c)$ time test.
After $\log^*n$ such iterations we obtain $O(n/\log^{3}n)$ fragments of size at most $\log ^3n$, and have
an $O(\frac{n \log \log n}{\log ^3n})$ time test, which we use to solve the dispersion optimization problem in $O(n)$ time. Since we have $\log^*n$ iterations and linear time work for each
iteration, the total preprocessing is in $O(n \log^*n)$ time.

We now describe one iteration of this process. Assuming we have obtained an $O(\frac{n}{b^3} \cdot \log b)$ time 
test from the previous step, we now find a good partition of the tree with $O(n/B^{3})$ fragments of size at most $B^3$, 
where $B=2^b$. We perform a heavy path decomposition on each fragment, and process the heavy
paths of a certain depth in all the fragments in parallel, starting with heavy paths of maximal depth, similarly to Section~\ref{sec:nloglogn}. Let us consider a single step of this computation. The paths at lower levels have been processed so that each of them is
reduced to a candidate and a certain node, and so each current path is a caterpillar, that we will now reduce to at most two nodes using Frederickson's method. As we have seen in Section \ref{sublinear f.t.}, after some processing, we can create a sorted matrix 
with linear number of rows and columns that contains all pairwise distances in such a caterpillar. As before, this is done by pruning the caterpillars so that the distances of leaves from the root and the hole are monotone, thus obtaining a matrix that is row and column sorted. For any level of the heavy path decomposition the maximum side length of a matrix is $B^{3}$ (i.e., the parameter $n_j$ in Theorem \ref{Frederickson's theorem} is equal to $B^3$) and the total size of all matrices in the level is at most $n$ ($m=n$).
We use Theorem~\ref{Frederickson's theorem} with the parameter $p$ set to $\frac{n}{B^7}$. We thus need to use the
test $O(\max \lbrace \log(\max_{j} \lbrace n_j \rbrace), \log(\frac{m}{p+1}) \rbrace) = O(\max \lbrace \log (B^{3}), \log(\frac{n}{n/B^{7}} \rbrace) = O(\log B)$ times, and additionally spend linear time exclusive of the feasibility tests. This allows us to eliminate all pairwise distances in the caterpillars of the heavy paths at the current level for most of the fragments. We proceed to compute the candidate and certain node of each path as we did before, and continue with the next level
of the heavy path decompositions, until we are left with only the top heavy path of each fragment.
Since $p>0$, during this preprocessing some fragments become active. Once a fragment becomes active we stop
its preprocessing. 

After the preprocessing is done, we have at most $\log B \cdot \frac{n}{B^7}\leq \frac{n}{B^{6}}$ active fragments (because we have not eliminated at most $\frac{n}{B^7}$ pairwise distances for each level), which we will process in $O(B^{3})$
time each in the new test. For all other fragments we have obtained a pruned caterpillar, so that we can, in linear time, precompute the required information for any candidate and certain node with respect to the hole
as described in Subsection \ref{Pre-Processing Fragments}. Notice that this requires the spine of the obtained caterpillar (i.e. the heavy path of depth zero in the decomposition) to also be the spine of the fragment. This can be ensured by simply tweaking the heavy path decomposition so that the first heavy path is always the root-to-hole path in every fragment. This change does not affect the maximal depth of a heavy path, and allows us to preprocess the pruned caterpillars, so that inactive fragments can be processed in logarithmic time in the new test. Thus we have obtained an $O(\frac{n}{B^3} \cdot \log B)$ time test.

We repeat this step iteratively until we have an $O(\frac{n}{\log ^3n} \cdot \log \log n)$ time test, and then use it to
solve the dispersion optimization problem in linear time as explained in Section~\ref{sec:nloglogn}, that is
by performing a heavy path decomposition on the whole tree and applying Theorem~\ref{Frederickson's theorem} with $p=0$.

In each step, accounting for all the levels, we use $O(\log^{2}B)$ calls to the previous test,
which costs $O(\log ^2B \cdot \frac{n}{b^3} \cdot \log b)=O(\frac{n}{b} \cdot \log b)$.
Additionally, we need linear time to construct a good partition, find all the heavy path decompositions,
and apply Theorem~\ref{Frederickson's theorem} on the obtained matrices. Thus, overall
a single iteration takes $O(n)$ time, so summing up over all iterations $O(n\log^{*}n)$.
Actually, the total time spent on calls to the previous tests sums up to only $O(n)$ over all iterations, which will
be important for the improved algorithm presented in the following subsections.

\subsection{Overview of the linear time algorithm}
In the rest of the section we describe how to improve the running time to $O(n)$, giving an optimal solution for the dispersion optimization problem.
Now we cannot afford to divide the tree into fragments separately in every iteration, as this would already add up to superlinear time.
Therefore, we will first modify the construction so that the partition (into small fragments) in the previous iteration is a refinement of the partition
(into large fragments) in the current iteration.

Consider a large fragment that is composed of multiple small fragments. Notice that some of these small fragments are active fragments, i.e., have not been reduced to caterpillars in the previous iteration, but most are inactive fragments, which have
been reduced to caterpillars. The large fragment contains some small fragments whose roots and holes are spine nodes of the large fragment,
and other fragments that form subtrees hanging off of the large fragment's spine (each hanging subtree may
contain several small fragments), see Figure \ref{figure of small fragments inside a large fragment}. This will allow us to
reuse some of the precomputation done in the previous iteration.

\begin{figure}[h]
\begin{center}
\includegraphics[scale=1]{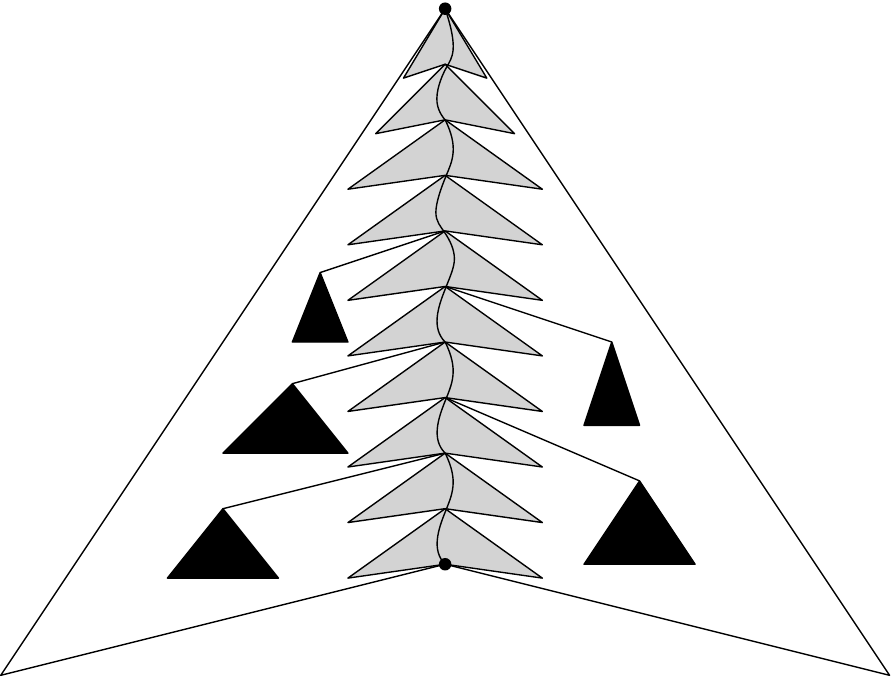}
\end{center}
\caption{A large fragment containing a number of smaller fragments. The small fragments on the spine are gray, and the subtrees hanging off of the spine are black (each of them may contain many small fragments). 
\label{figure of small fragments inside a large fragment}}
\end{figure}

The partitioning used in the algorithm is presented in Subsection \ref{section:partioioning}. In Subsection~\ref{section:lemma1} we show how to reduce the hanging subtrees to at most two nodes, one candidate and one certain. Then, in Subsection~\ref{section:lemma2} we reduce the small active fragments on the spine to caterpillars. This turns the entire fragment into a large caterpillar. In Subsection~\ref{section:lemma3} we prune this large caterpillar so that it becomes monotone and without collisions. Finally, in Subsection~\ref{section:lemma4} we precompute
the resulting large caterpillar for any candidate and certain node with respect to the hole.

\subsection{Partitioning into fragments}\label{section:partioioning}
The following lemma is similar to Lemma 3 in \cite{Frederickson1985}, except that we are working with rooted trees
and our definition of a fragment is more restrictive. It can be also inferred, with some work, from the properties of top trees~\cite{TopTrees}.

\begin{lemma}\label{good partition refinement lemma}
Define a sequence $b_{1},b_{2},\ldots,b_{s}$ of length $s=O(\log^{*}n)$, such that $b_{1}=\Theta(1)$, $b_{s} = \Theta(\log n)$,
and $b_{\ell+1} = 2^{b_{\ell}}$ for $\ell=1,2,\ldots,s-1$. Then, given any binary tree on $n$ nodes,
we can construct in linear time a hierarchy of $s$ good partitions, such that 
the partition at level $\ell$ of the hierarchy has $O(n/{{b_{\ell}}^4})$ fragments, each of size $O({b_{\ell}}^5)$,
and every fragment is contained inside a fragment in the partition at level $\ell+1$.
\end{lemma}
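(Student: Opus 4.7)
The plan is to build the hierarchy inductively, applying Lemma~\ref{basic partitioning lemma} once per level, but on progressively smaller \emph{contracted} trees so that the refinement property comes for free. For the base level $\ell = 1$, apply Lemma~\ref{basic partitioning lemma} to the input tree $T$ with parameter $b_1^5$; this yields a good partition into $O(n/b_1^5) \subseteq O(n/b_1^4)$ fragments of size at most $b_1^5$ in $O(n)$ time.

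For each $\ell = 2, 3, \ldots, s$, form the contracted tree $T_{\ell-1}$ by collapsing every level-$(\ell-1)$ fragment into a single super-node, with super-nodes connected according to how the corresponding fragments meet in $T$ (the super-node of a fragment with hole $v$ becomes the parent of the super-node of the fragment rooted at $v$). Since $T$ is binary and each fragment has at most two boundary nodes, $T_{\ell-1}$ is itself binary and has $|T_{\ell-1}| = O(n/b_{\ell-1}^5)$ nodes. Apply Lemma~\ref{basic partitioning lemma} to $T_{\ell-1}$ with parameter $b_\ell^5 / b_{\ell-1}^5$, then expand every super-node back to its level-$(\ell-1)$ fragment to obtain the level-$\ell$ partition. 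This yields $O(|T_{\ell-1}| \cdot b_{\ell-1}^5 / b_\ell^5) = O(n/b_\ell^5) \subseteq O(n/b_\ell^4)$ fragments, each of size at most $(b_\ell^5/b_{\ell-1}^5) \cdot b_{\ell-1}^5 = b_\ell^5$ in $T$. Refinement is immediate by construction, since every level-$\ell$ fragment is a union of level-$(\ell-1)$ fragments.

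The total running time sums to $O(n) + \sum_{\ell \ge 2} O(|T_{\ell-1}|) = O(n) \cdot (1 + 1/b_1^5 + 1/b_2^5 + \cdots)$, which is $O(n)$ since the tower growth $b_{\ell+1} = 2^{b_\ell}$ makes the series converge to a constant factor.

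The main obstacle will be verifying that the expanded partition actually consists of valid fragments in the sense of Section~\ref{Pre-Processing Fragments}. Concretely, a level-$\ell$ fragment in $T_{\ell-1}$ has a root super-node $U$ and, possibly, a hole super-node $V$; its expansion in $T$ should equal the subtree of $u$ minus the subtree of $v$, where $u$ is the root of the level-$(\ell-1)$ fragment $U$ and $v$ is the root of the level-$(\ell-1)$ fragment $V$. This is true because parent-child edges in the contracted tree exactly encode the hole-to-root attachments between level-$(\ell-1)$ fragments, but care is needed at corner cases such as super-nodes corresponding to holeless fragments (which become leaves in $T_{\ell-1}$) and single-node branching fragments produced by the proof of Lemma~\ref{basic partitioning lemma}.
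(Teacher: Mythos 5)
Your construction (collapse each level's fragments into super-nodes, re-partition the contracted tree with Lemma~\ref{basic partitioning lemma}, expand) is the same as the paper's, and the refinement-for-free and geometric time sum are both correct in spirit. However there is a genuine gap in the bookkeeping that your proof papers over and the paper's proof addresses head-on.

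Lemma~\ref{basic partitioning lemma} produces $O(n/b)$ fragments, i.e.\ at most $\alpha\cdot n/b$ for some fixed constant $\alpha>1$. When you reapply it at each level you compound this constant: the number of level-$\ell$ fragments is bounded by $\alpha^\ell\cdot n/b_\ell^5$, not $O(n/b_\ell^5)$ with a uniform constant. Since the hierarchy has $s=\Theta(\log^* n)$ levels, $\alpha^s$ is \emph{not} $O(1)$, so your step ``$O(n/b_\ell^5)\subseteq O(n/b_\ell^4)$'' asserts a containment between two families of bounds whose hidden constants are not commensurable. The whole reason the lemma's statement has exponent~$4$ in the fragment count and exponent~$5$ in the fragment size is precisely to create the slack needed to absorb the accumulated $\alpha^\ell$; your proof never invokes that slack, so the claimed bounds are not established. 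The paper handles this explicitly: it normalizes Lemma~\ref{basic partitioning lemma} so the count is exactly $n/b^4$ (by invoking it with parameter $\alpha b$, pushing the constant into the fragment size), carries the factor $\alpha^\ell$ in the size bound $\alpha^\ell b_\ell^4$, and then observes that $\alpha^\ell\le b_\ell$ once $\ell$ exceeds a constant depending only on $\alpha$ --- the tower growth $b_{\ell+1}=2^{b_\ell}$ eventually dominates the merely geometric $\alpha^\ell$ --- which makes the size $O(b_\ell^5)$ as required. Your proof can be repaired by inserting exactly this comparison of $\alpha^\ell$ against $b_\ell$ (and choosing $b_1=\Theta(1)$ large enough relative to $\alpha$), but as written the count and size bounds do not follow.

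By contrast, the ``main obstacle'' you flag --- verifying that the expansion of a super-node fragment is a valid fragment of $T$ with at most two boundary nodes --- is the routine part; the paper dispenses with it in a single sentence, because the parent--child structure of the contracted tree is defined precisely so that the expansion of a connected set of super-nodes with at most two boundary super-nodes is a connected subgraph of $T$ with at most two boundary nodes.
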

\begin{proof}
We would like to construct the hierarchy of partitions by repeatedly applying Lemma \ref{basic partitioning lemma}. For convenience we slightly tweak  the definition of a good partition, so that for a parameter $b$ there are at most $n/b$ fragments, each of them of size $O(b)$ (as opposed to the original definition, where we had $O(n/b)$ fragments of size at most $b$). We can achieve such a partitioning by applying Lemma \ref{basic partitioning lemma} with the parameter set to $\alpha\cdot b$, where $\alpha$ is the constant hidden in the statement of Lemma~\ref{basic partitioning lemma}. This results in a partition with at most $n/b$ fragments, each of size at most $\alpha \cdot b$. 

We now show how to find a good partition of the tree into at most $n/(2^b)^{4}$ fragments each of size at most $\alpha^{\ell+1}\cdot (2^{b})^{4}$, in $O(n/b^{4})$ time, given a good partition of the tree into at most $n/b^4$ fragments each of size at most $\alpha^{\ell}\cdot b^{4}$, s.t. every fragment in the first partition is contained inside a fragment in the second partition.

Define $T'$ as the tree obtained by collapsing each fragment of the given partition into a single node. Partition $T'$ with the parameter set to $B= {(2^{b})^{4}}/{b^{4}}$. We obtain at most $\frac{n/b^4}{B}=n/(2^{b})^{4}$ fragments. Each of the new large fragments corresponds to a fragment of the original tree of size at most $\alpha\cdot \alpha^{\ell}\cdot b^{4}\cdot B = \alpha^{\ell+1}\cdot (2^{b})^{4}$. Clearly, it holds that each fragment of the given partition is contained in a single large fragment.

The entire hierarchy is obtained by applying this process $\log^{*}n$ times. 
If $b_{\ell}$ denotes the value of the parameter $b$ in the $\ell$-th application, then  $b_{\ell+1}=2^{b_{\ell}}$. The time to construct the $(\ell+1)$-th partition is $O(n/(b_{\ell})^{4})=O(n/b_{\ell})$. The overall time to construct all partitions is therefore 
$$ O(\sum_{\ell} {n}/{b_{\ell}}) = O(\sum_{\ell} {n}/{2^{\ell}}) = O(n).$$
Observe that the size of a (large) fragment in the $(\ell+1)$-th partition is at most $\alpha^{\ell+1} \cdot (b_{\ell+1})^{4}$ which is at most $O((b_{\ell+1})^5)$ for a large enough $\ell$ (i.e., an $\ell$ s.t. $\alpha^{\ell+1}\leq 2^{2^{\ell+1}}$ and $b_{\ell+1}\geq 2^{2^{\ell+1}}$, so
this $\ell$ depends only on the constant $\alpha$).

\end{proof}
 
\noindent Setting $b=b_{\ell}$ and $B=b_{\ell+1}$  we conclude that every large fragment of the $(\ell+1)$-th partition is of size $O(B^5)$ and consists
of small fragments of the $\ell$-th partition (each of size $O(b^5)$).
The goal of the $(\ell+1)$-th iteration will be now to preprocess all $O(n/B^{4})$ large fragments, so that we obtain an $O(\frac{n}{B^{4}}\log B)$
time test. To this end, we will preprocess all inactive fragments, so that they can be skipped in $O(\log B)$ time. For the remaining active
large fragments, we guarantee that there are only $O(n/B^{9})$ of them, so processing each in $O(B^{5})$ time leads
to the claimed query time. To make sure that the total preprocessing time is linear, we will reuse the preprocessing already done
for the $O(n/b^{4})$ small fragments (by induction, only $O(n/b^{9})$ of them are active, while the others are preprocessed).

\subsection{Reducing a hanging subtree into two nodes}\label{section:lemma1}
The first step in reducing a large fragment into a caterpillar is reducing the hanging subtrees of the large fragment into at most two nodes. The following lemma shows that this can be done in total $O(n)$ time for all fragments in all the $\log^* n$ partitions.

\begin{lemma}\label{lemma1}
The hanging subtrees of the large fragments in all the partitions can be reduced to at most two nodes (one candidate and one certain) in total $O(n)$ time. 
\end{lemma}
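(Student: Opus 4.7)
The plan is to reduce hanging subtrees level by level along the hierarchy of Lemma~\ref{good partition refinement lemma}, reusing iteration $\ell$'s preprocessing to handle level $\ell+1$. By the construction of the hierarchy, each hanging subtree $T_w$ of a large fragment at level $\ell+1$ is the disjoint union of whole small fragments from the level-$\ell$ partition; in particular, the off-spine content that lives inside a spine small fragment was already reduced during iteration $\ell$ and does not appear again as hanging content at level $\ell+1$. Consequently, to reduce $T_w$ to one candidate and one certain node it suffices to run iteration $\ell$'s sublinear feasibility test bottom-up on $T_w$ and read off the answer at the root $w$; for the base case ($\ell=1$) we simply use the linear-time test from Lemma~\ref{lineartimecorrectness}.

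\textbf{Carrying out a single level.} Fix an arbitrary $\lambda\in(\lambda_1,\lambda_2)$ once; since every level-$\ell$ fragment we touch is inactive for this interval, the test's output on those pieces does not depend on the particular choice of $\lambda$ and the reduction remains valid for every $\lambda$ in $(\lambda_1,\lambda_2)$. At level $\ell+1$ we iterate over all hanging subtrees $T_w$ and simulate iteration $\ell$'s bottom-up procedure on $T_w$: inactive level-$\ell$ small fragments inside $T_w$ are skipped in $O(\log b_\ell)$ time using the structures produced by iteration $\ell$, while the active level-$\ell$ small fragments are each processed naively in $O(b_\ell^5)$ time by the linear test of Section~\ref{linear F.T.}. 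We then record the candidate/certain pair obtained at $w$ and use it in place of $T_w$ in the subsequent steps of Subsections~\ref{section:lemma2}--\ref{section:lemma4}.

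\textbf{Time analysis and main obstacle.} Hanging subtrees at level $\ell+1$ are pairwise disjoint with total size at most $n$, hence they collectively contain at most $O(n/b_\ell^4)$ level-$\ell$ small fragments, of which, inductively, only $O(n/b_\ell^9)$ are active. The total cost at level $\ell+1$ is therefore
\[
O\!\left(\frac{n}{b_\ell^4}\cdot\log b_\ell\right)+O\!\left(\frac{n}{b_\ell^9}\cdot b_\ell^5\right)=O\!\left(\frac{n\log b_\ell}{b_\ell^4}\right),
\]
and since $b_{\ell+1}=2^{b_\ell}$ with $b_1=\Theta(1)$, the series $\sum_{\ell\geq 1} (\log b_\ell)/b_\ell^4$ is dominated by its first term, so summing over all $s-1$ levels yields $O(n)$ overall (the $\ell=1$ level itself is absorbed because level-$1$ fragments have constant size and the linear test costs $O(n)$ over all level-$1$ hanging subtrees). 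The main obstacle is precisely making this telescoping work: it relies on two ingredients that must be in place beforehand, namely the refinement property of Lemma~\ref{good partition refinement lemma} (so that small-fragment preprocessing can be reused without re-partitioning at level $\ell+1$) and the inductive bound $O(n/b_\ell^9)$ on the number of active level-$\ell$ small fragments; without either of them the sum over levels would fail to collapse to linear time.
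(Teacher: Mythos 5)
Your approach has a fundamental gap in the step where you claim that fixing an arbitrary $\lambda\in(\lambda_1,\lambda_2)$ and running iteration~$\ell$'s feasibility test on $T_w$ yields a reduction valid for every $\lambda$ in the interval. The inactivity of a level-$\ell$ fragment $F$ only controls \emph{intra-fragment} distances: for $u_1,u_2\in F$, $d(u_1,u_2)$ (and $2d(u_1,u_2)$) lies outside $(\lambda_1,\lambda_2)$. But when the bottom-up test processes $T_w$, the ``jump'' over $F$ performs binary searches that compare $\lambda$ to distances $d(v,u_i)$ where $v$ is the certain or candidate node w.r.t.\ $F$'s hole --- a node produced by fragments \emph{below} $F$ in $T_w$, hence outside $F$. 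Such cross-fragment distances are not controlled by inactivity and may well fall inside $(\lambda_1,\lambda_2)$, so the outcome of those comparisons, and therefore the candidate/certain pair computed at $w$, genuinely depends on the chosen $\lambda$. The hanging subtree $T_w$ is not itself an inactive object at level $\ell+1$; you cannot conclude $\lambda$-independence of the reduction just from the $\lambda$-independence of each small fragment's \emph{precomputed} structure.

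The paper's proof closes exactly this gap and it is the heart of the argument: it builds a heavy path decomposition of each hanging subtree (depth $O(\log B)$ since $|T_w|=O(B^5)$), and at each level applies Frederickson's search (Theorem~\ref{Frederickson's theorem}) in parallel across all hanging subtrees, with $p=n/B^{10}$, to further shrink $(\lambda_1,\lambda_2)$ until the relevant cross-fragment comparisons are resolved. This is where the $O(\log B)$ test calls per level, and hence $O(\log^2 B)$ calls per iteration, come from, and it is also the source of the budget of $O(n/B^{9})$ newly-active large fragments. Your time bound of $O(n\log b_\ell/b_\ell^4)$ per level reflects running the test only once, which is symptomatic of the missing Frederickson step; the paper's bound is the larger (but still telescoping) $O(n/b_\ell)$ per iteration. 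Without the parametric search over the hanging subtrees' pairwise distances, the reduction you produce is only correct for a single $\lambda$, not for the whole surviving interval, and the later steps of Subsections~\ref{section:lemma2}--\ref{section:lemma4}, which assume a $\lambda$-independent reduction, would break.
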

\begin{proof}
We perform a heavy path decomposition on each hanging subtree. Since each such subtree is contained in a large fragment, its 
size is bounded by $O(B^5)$, and so we have at most $O(\log B)$ levels in the heavy 
path decompositions. On each level we run Frederickson's search (similarly to what we have presented before).
We run this search in parallel, for all heavy paths from all these hanging subtrees in all large fragments.
The number of  nodes in each level for all the heavy paths together is at most $m=n$ and we set
the parameter $p$ (in Theorem~\ref{Frederickson's theorem}) to $\frac{n}{B^{10}}$, so the number of calls to the $O(\frac{n}{b^4} \cdot \log b)$ time feasibility test
is $O(\max \lbrace \log(\max_{j} \lbrace n_j \rbrace), \log(\frac{m}{p+1}) \rbrace) = O(\max \lbrace \log (B^{5}), \log(\frac{n}{n/B^{10}} \rbrace) = O(\log B)$ per level, so
$O(\frac{n}{b^4} \cdot \log b \cdot \log^{2} B) = O(\frac{n}{b})$ for every iteration.
Summing over all $\log^{*}n$ iterations, this is $O(\sum_{\ell}\frac{n}{b_\ell}) = O(n)$.

Constructing the heavy path decompositions and then running Frederickson's search takes linear time in the number
of nodes in the subtrees. If all of them were then reduced to at most two nodes, one candidate and one certain
node for each subtree, this would amortize to $O(n)$ time when summed over all the $\log^{*}n$  iterations. However, we have set
the parameter $p$ to $\frac{n}{B^{10}}$. Thus, considering all the levels, there
might be up to $\log B\cdot\frac{n}{B^{10}}\leq \frac{n}{B^{9}}$ large fragments in which we cannot fully reduce
all the hanging subtrees. We declare them the active large fragments and do not preprocess further.
Considering the total preprocessing time when summed over all the iterations, it is $O(n)$
plus the total size of the active fragments in all iterations, which is $O(\sum_{\ell} \frac{n}{b_{\ell}^{9}}\cdot b_{\ell}^{5})=O(\sum_{\ell} n/b_{\ell})=O(n)$.
\end{proof}

\subsection{Reducing a small active fragment into a caterpillar}\label{section:lemma2}
Having processed the hanging subtrees, we now deal with the active small fragments on the spine of the large fragment. 

\begin{lemma}\label{lemma2}
	The small active fragments on the spines of all large fragments can be reduced to caterpillars in total $O(n)$ time. 
\end{lemma}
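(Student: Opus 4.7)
The plan is to mirror the argument of Lemma~\ref{lemma1}, with the object of reduction shifted from the hanging subtrees of large fragments to the internal hanging subtrees of the small active fragments that sit on the spines of the large fragments. By the inductive hypothesis for iteration $\ell$, there are $O(n/b^9)$ active small fragments, each of size $O(b^5)$, so the total size of their internal hanging subtrees is $O(n/b^4)$. To reduce such a small fragment to a caterpillar it suffices, exactly as in the original caterpillar reduction of Section~\ref{Pre-Processing Fragments}, to reduce every one of its internal hanging subtrees to at most one candidate and one certain node.

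The bulk of the work is then an almost verbatim repeat of the machinery in Lemma~\ref{lemma1}. First, I would run a heavy path decomposition on every internal hanging subtree; since each has size $O(b^5)$, each decomposition has depth $O(\log B)$. Then, level by level from the deepest, prune the level-$d$ heavy paths into monotone caterpillars, build the associated row- and column-sorted matrices of pairwise distances, and invoke Theorem~\ref{Frederickson's theorem} in parallel across every active small fragment in every large fragment. Setting $p = n/B^{10}$ and using the previous iteration's $O(\tfrac{n}{b^4}\log b)$-time feasibility test, each level requires $O(\log B)$ calls, yielding $O(\log^2 B\cdot \tfrac{n}{b^4}\log b) = O(n/b)$ time in iteration $\ell+1$. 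Because $b_\ell$ grows doubly exponentially, the geometric sum $\sum_\ell n/b_\ell$ over the $\log^* n$ iterations is $O(n)$. The linear bookkeeping (constructing the decompositions, pruning the caterpillars, constructing the matrices) is $O(n/b^4)$ per iteration and also telescopes to $O(n)$.

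The step I expect to require the most care is the accounting of residual activity: a small active fragment whose Frederickson's search still leaves an unresolved matrix entry cannot be turned into a caterpillar, and the large fragment containing it must be declared active. With $p = n/B^{10}$ over $O(\log B)$ levels, the number of such extra active large fragments contributed by this lemma is at most $O(\log B\cdot n/B^{10}) = O(n/B^9)$, matching the budget already spent by Lemma~\ref{lemma1}, so that the overall count of active large fragments entering the next stage remains $O(n/B^9)$ as required. A secondary delicate point is to verify that the preprocessing of iteration $\ell$ is reused rather than redone on the parts of each small active fragment that were already resolved, which is what makes the per-iteration $O(n/b)$ bound (and hence the overall $O(n)$ total) tight.
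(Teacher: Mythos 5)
Your proof is correct and follows essentially the same route as the paper's: heavy-path decompose the active small fragments (or equivalently their hanging subtrees), process levels bottom-up with Frederickson's search setting $p = n/B^{10}$, charge the per-iteration work $O(n/b)$ for the feasibility-test calls and $O(n/b^4)$ for bookkeeping, declare extra active large fragments for unresolved entries, and sum over the $\log^* n$ iterations. One small inaccuracy: each active small fragment has size $O(b^5)$, so its heavy path decomposition has depth $O(\log b)$, not $O(\log B) = O(b)$ as you wrote; your looser bound still gives $O(\log^2 B\cdot\tfrac{n}{b^4}\log b) = O(n/b)$, so the arithmetic and the conclusion are unaffected, but the paper uses the tighter $O(\log b)$ (matching $O(\tfrac{n}{b^4}\log^2 b\log B) = O(n/b)$).
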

\begin{proof}

In the previous iteration we had $O(\frac{n}{b^9})$ active fragments, which is of course
too many for the current level. We take the remaining active small fragments (the ones on spines of the large 
fragments), and reduce them to caterpillars (again by constructing a heavy path decomposition and running
Frederickson's search with $p$ set to $\frac{n}{B^{10}}$). In each fragment we have $O(\log b)$ levels of the heavy
path decomposition, and for each level we need $O(\log B)$ calls to the feasibility test, which takes
$O(\frac{n}{b^4} \cdot \log b \cdot \log B \cdot \log b) = O(\frac{n}{b})$ time in total. We also do linear
work exclusive of the feasibility tests. This processing of small active fragments takes $O(n)$ time over all
$\log^*n$ iterations, by the same reasoning as for the hanging subtrees.
As before, we declare a large fragment active if we were not able to reduce some remaining active small
fragment inside and do not preproces further.
\end{proof}

Note that even though we have reduced the small active fragments to caterpillars, we have not preprocessed them yet
(as opposed to the small inactive fragments, that have been reduced to caterpillars and then further preprocessed).
However, there are only $O(n/b^{9})$ such fragments, each of size $O(b^{5})$, so we can split each
of these caterpillars into caterpillars containing exactly one spine node, that are trivial to preprocess.
The total number of small caterpillars is then still $O(n/b^{4}+n/b^{9}\cdot b^{5})=O(n/b^{4})$,
so it does not increase asymptotically.

\subsection{Pruning the resulting large caterpillar}\label{section:lemma3}
So far, each inactive large fragment was reduced to one large caterpillar consisting of the concatenation of caterpillars from the smaller fragments on its spine. We want
to prune this large caterpillar (as we have done with the inactive fragments before) in order to ensure
that it is monotone and contains no collisions.   Then, we will
be able to preprocess the caterpillar for every possible candidate and certain node with respect to the hole.

\begin{lemma}\label{lemma3}
	The inactive large fragments can be processed in overall $O(n)$ time so that each of them is a caterpillar, the distances of the leaves from the root and the hole are monotone, and there are no collisions between candidate nodes and certain nodes.
\end{lemma}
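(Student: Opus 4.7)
The plan is to exploit the structure already present at the end of Lemmas \ref{lemma1} and \ref{lemma2}. At that point every inactive large fragment is a caterpillar along the large fragment's spine whose leaves come from three sources: (a) leaves of the small inactive caterpillars sitting on the large spine (already pruned in the previous iteration to be monotone with respect to their own root and hole, with no internal collisions), (b) the at-most-two-node reductions of the large fragment's hanging subtrees produced by Lemma \ref{lemma1}, and (c) the singleton caterpillars obtained by splitting the reductions of the spine's small active fragments in Lemma \ref{lemma2}. Because all leaves in one small inactive caterpillar share the same additive offset from the large root (and from the large hole), the internal monotonicity established in the previous iteration survives when we switch viewpoint to the large root/hole, so monotonicity can fail only at the interfaces between consecutive small caterpillars and at the spots where the new leaves from (b) and (c) are inserted.

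To impose monotonicity from the large root I will scan the large spine top-down through its constituent small caterpillars $C_1, C_2, \ldots$, maintaining the maximum distance $M$ to the large root among the surviving leaves placed so far. On entering $C_i$ I will first insert into $C_i$'s leaf list the new leaves of type (b) and (c) whose spine positions fall inside $C_i$, using the balanced search tree representation carried over from the previous iteration, in $O(\log b_\ell)$ per insertion. Then I will perform one binary search in $C_i$ for the first leaf whose distance to the large root is at least $M$, split off and discard the strictly smaller prefix, and update $M$ to the distance of the last surviving leaf of $C_i$. A symmetric bottom-up pass delivers monotonicity from the hole. For collision removal I will reuse the precomputed topmost-certain node, bottommost-certain node, and count of certain nodes stored with each small caterpillar: a top-down scan followed by a bottom-up scan over the representative certain nodes (a constant many per small caterpillar, plus the $O(1)$ new ones) determines, for every candidate, its nearest certain above and below in $O(1)$ amortized per caterpillar, after which a comparison against $\lambda$ removes each offending candidate.

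For the running time, fix iteration $\ell+1$ and set $b=b_\ell$. The inactive large fragments jointly contain at most $O(n/b^{4})$ small caterpillars on their spines, at most $O(n/b^{4})$ newly attached leaves from Lemma \ref{lemma1}, and at most $O((n/b^{9})\cdot b^{5}) = O(n/b^{4})$ singleton caterpillars produced by Lemma \ref{lemma2}. Each such object incurs $O(\log b)$ work for an insertion, a binary search, or a balanced-tree split, so the iteration costs $O((n/b^{4})\log b)$. Summing over $\ell=1,\ldots,\log^{*}n$ and using that $b_{\ell+1}=2^{b_\ell}$ grows doubly exponentially, the total is $O(n)$.

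The step I expect to be the main obstacle is arguing that inserting a new leaf of type (b) or (c) in the middle of a pre-pruned small caterpillar $C_i$ does not trigger a cascade of re-prunings inside $C_i$, which would ruin the per-caterpillar $O(\log b_\ell)$ bound. The key to avoiding this is that the subsequent scan eliminates exactly a prefix of $C_i$ in the root direction (and, symmetrically in the second pass, exactly a suffix in the hole direction), so any local violation introduced by an insertion is absorbed by the single global cut chosen by the running maximum $M$. With $C_i$ kept as a balanced search tree with split and join, each cut is a single split and each insertion a standard update, keeping the work at $O(\log b_\ell)$ per small caterpillar rather than $\Theta(|C_i|)$.
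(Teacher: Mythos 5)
Your monotonicity pruning is plausible and close in spirit to the paper's: those comparisons involve only distances of leaves to the root or to the hole, so they resolve arithmetically without any reference to $\lambda^*$, and a top-to-bottom scan (then a symmetric bottom-to-top one) with at most one cut per small caterpillar is enough. The real gap is in the collision-elimination step. You write that after locating each candidate's nearest certain node above and below, ``a comparison against $\lambda$ removes each offending candidate.'' But at this point in the algorithm there is no concrete $\lambda$: we only know an interval $[\lambda_1,\lambda_2)$ that contains $\lambda^*$. The distance between a candidate of one small caterpillar and a certain node of a \emph{different} small caterpillar is an inter-fragment distance that no earlier application of Frederickson's search has resolved, so it may well fall strictly inside $(\lambda_1,\lambda_2)$, and then the comparison you want to perform is undecidable without further feasibility tests. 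The paper's proof turns precisely on this point: it resorts once more to Frederickson's parametric search, and moreover observes that one cannot simply build a $1\times O(b^5)$ matrix per small caterpillar, both because the time exclusive of tests would not amortize to $O(n)$ over the $\log^* n$ iterations and because the linked-list representation forbids constant-time random access inside a caterpillar. Instead the paper runs an $O(\log b)$-round exponential search, extracting in round $i$ the top $2^i$ candidates of each still-unresolved caterpillar into an array, so that the array-building work can be charged to eliminated candidates. This mechanism, together with the accounting of additional active large fragments it spawns (since $p>0$ in each application of Theorem~\ref{Frederickson's theorem}), is entirely absent from your proposal, and the $O(\log b_\ell)$-per-caterpillar bound you claim for collision removal does not hold without it.
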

\begin{proof}

We start by pruning the large caterpillar so that distances of the candidates from the root are
monotone. Since the distances are monotone inside each small caterpillar, we only have to check the last (bottom most) 
candidate node of a small caterpillar against a range of consecutive candidates nodes below it in the large
caterpillar, starting from the top candidate node in the next small caterpillar. 
At this point we need to be more precise about how the fragments are represented. Each inactive small
fragment has been reduced to a caterpillar. For each such caterpillar, we maintain a doubly-linked list containing
all of its candidate nodes in the natural order, and additionally we store the first and the last certain
node (nearest to the root and to the hole, respectively). Assuming such representation, we traverse
the large caterpillar from bottom to top while
maintaining a stack with the already processed small caterpillars that still contain some candidate nodes.
We retrieve the last candidate of the current small caterpillar and eliminate the appropriate candidate
nodes below it. This is done by iteratively retrieving the top small caterpillar from the stack and then
accessing its top remaining candidate node $u$. If the distance of $u$ from the root is too small, we
remove $u$ from the front of its doubly-linked list, pop the list from the stack if it becomes empty,
and repeat. Now the distances of the remaining candidates from the root are
non-decreasing. We repeat the symmetric process from top to bottom to ensure that also the distances
from the hole are non-decreasing. The whole procedure takes linear time in the number of eliminated
candidate nodes plus the number of small fragments, so $O(n)$ overall.

Now we need to eliminate all internal ``collisions'' inside the large caterpillar (i.e., remove candidate nodes that
are too close to certain nodes). These collisions can occur between a candidate node of one small caterpillar
and a certain node of another small caterpillar. Each certain node eliminates a range of consecutive
candidates nodes below and above its small caterpillar. Therefore, we only need to consider the 
last certain node of a small caterpillar and a range of consecutive candidate nodes below it in the
large caterpillar, starting from the top candidate node in the next caterpillar (and then repeat
the reasoning for the first certain node and a range of candidates nodes above it). This seems very similar
to the previous step, but in fact is more complex: a certain node $v$ eliminates a candidate node $u$ when
$d(u,v)<\lambda^{*}$, but we do not know the exact value $\lambda^{*}$ yet! Therefore, we need to run
a feasibility test to determine if $u$ should be eliminated. We cannot afford to run many such tests
independently and hence will again resort to Frederickson's search. Before we proceed, observe that
it is enough to find, for every small caterpillar, the nearest certain node $v$ above it, and then determine
the prefix of the small caterpillar that is eliminated by $v$. To find $v$ for every small caterpillar,
we first traverse the large caterpillar from top to bottom while maintaining the nearest certain node
in the already seen prefix.

To apply Frederickson's search, we would like to create a matrix of dimension $1\times O(b^{5})$
for each of the $O(\frac{n}{b^{4}})$ small caterpillars in the whole tree. The total size of all small
caterpillars is at most $m=n$, so setting $p$ to $\frac{n}{B^9}$ would imply
$O(\max \lbrace \log(\max_{j} \lbrace n_j \rbrace), \log(\frac{m}{p+1}) \rbrace) = O(\max \lbrace \log (b^{5}), \log(\frac{n}{n/B^{9}}) \rbrace = O(\log B)$ calls to the $O(\frac{n}{b^{4}} \cdot \log b)$ time test.
There are two difficulties, though. First, we cannot guarantee that the processing exclusive of the feasibility tests
would sum up to $O(n)$ over all iterations. Second, it is not clear how to provide constant time access
to these matrices, as the candidate nodes inside every small caterpillar are stored in a doubly-linked list,
so we are not able to retrieve any of them in constant time.

We mitigate both difficulties with the same trick. We proceed in $O(\log b)$ steps that, intuitively, correspond
to exponentially searching for how long should the eliminated prefix be. In the $i$-th step we
create a matrix of dimension $1\times 2^{i}$ for each of the still remaining small caterpillars.
The matrix is stored explicitly, that is we extract the top $2^{i}$ candidates nodes from the doubly-linked
list and store them in an array. We set $p$ to $\frac{n}{B^{10}}$ and run Frederickson's search
on these matrices. Then, if all $2^{i}$ candidate nodes have been eliminated, we proceed to the next
step. Otherwise we have already determined the small caterpillar's prefix that should be eliminated. The total time for all feasibility tests is then
$O(\frac{n}{b^{4}}\cdot \log b \cdot \log b \cdot \log B) = O(\frac{n}{b})$. 
Observe that the total length of all arrays constructed during this procedure is bounded by the number
of eliminated candidate nodes multiplied by 4. Consequently, both the time necessary
to extract the top candidate nodes (and arrange them in an array) and the time exclusive of the feasibility tests sums up to $O(n)$
over all iterations. During this process we might declare another $O(\frac{n}{B^9})$ large fragments active
by the choice of $p$.
\end{proof}

\subsection{Preprocessing the large caterpillar for every possible eliminated suffix}\label{section:lemma4}

Having reduced a large fragment to a large caterpillar, we are now left only with preprocessing this large caterpillar for every possible eliminated suffix.
Because we have already pruned the large caterpillar and removed any collisions between a certain
node and a candidate node, we only have to consider the remaining candidate nodes. For convenience we will refer to the remaining candidate nodes attached to the spine simply as nodes, and reserve the term candidate for the nodes stored as candidate nodes of solutions in the precomputed information.

We need to store the caterpillars in a data structure that will allow us to merge caterpillars efficiently, as well as search for the bottom-most node in a caterpillar that is at distance at least $d$ from the hole, for any given $d \le \lambda^*$. 
We therefore store the nodes of each caterpillar in a balanced search tree keyed by the nodes' distance to the hole of the caterpillar. Note that even though each search tree is keyed by distances to a different hole, we can merge two trees in logarithmic time, since they correspond to two small caterpillars where one is above the other (and hence all its keys can be thought of as larger than all the keys of the bottom one). Since we merge search trees of small caterpillars into one  search tree of a large caterpillar, each merge operation costs at most $O(\log B)$, and we have $O(n/b^9)$ such operations (one per small caterpillar), this sums up to $O(\frac{n}{b})$ and $O(n)$ overall.
Maintaining predecessor and successor pointers in the search trees enables the linked list interface that was used in the previous steps, and since in the previous steps we only prune prefixes and suffixes of small caterpillars, we can use split operations for that.

The main goal of the current step is to update the information stored in each node (i.e. the certain node, candidate node, and number of chosen nodes) for any node that might be queried in the future, so that the information holds the solution for the large caterpillar. When doing this, we need to take into account that the information that is currently stored in nodes might be false, even with regard to the appropriate small caterpillar, since we have eliminated some nodes in the previous step.
We next define a few terms and show a lemma that will be used in the rest of this section.

\vspace{0.04in} \noindent {\bf The unstable and interesting affixes.} 
Consider a caterpillar with root $r$ and hole $h$, and let $u_{i}$ denote its $i$-th node. In Lemma~\ref{lemma3} we have already made sure that $d(r,u_{i}) < d(r,u_{i+1})$
and that $d(h,u_{i}) > d(h,u_{i+1})$. We define the \emph{unstable prefix} to
consist of all nodes $u_{i}$ such that $d(r,u_{i}) < \frac{\lambda^{*}}{2}$, and similarly the
\emph{unstable suffix} consists of all nodes $u_{i}$ such that $d(h,u_{i}) < \frac{\lambda^{*}}{2}$.
We have the following property.

\begin{lemma}
\label{stable infix}
If a node of a small caterpillar is removed due to pruning or collision elimination (i.e. due to Lemma~\ref{lemma3}) then it
belongs to the unstable prefix or to the unstable suffix.
\end{lemma}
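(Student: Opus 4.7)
The plan is a case analysis on the reason $u$ was removed by the procedure in Lemma~\ref{lemma3}: either (i) by the top-to-bottom pruning that enforces non-decreasing $d(r,\cdot)$, because some previously retained $u'$ strictly above $u$ satisfies $d(r,u') > d(r,u)$; (ii) by the symmetric bottom-to-top pruning from the hole; or (iii) by collision elimination against a certain node $v$ in a different small caterpillar, witnessed by $d(u,v) < \lambda^*$.

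Before the case analysis, I would set up the coordinates of Step~\ref{making distances from the root monotone} of Section~\ref{Pre-Processing Fragments}: each remaining node $u$ of the large caterpillar is attached at edge length $y_u$ to a spine node at distance $x_u$ from $r$, so that $d(r,u) = x_u + y_u$ and $d(h,u) = (L - x_u) + y_u$, where $L$ is the spine length. The key universal bound is $y_u < \lambda^*/2$: every such $u$ is a candidate of some small caterpillar $C$, hence $y_u \le d(r_C, u) < \lambda^*/2$. Equally important, inside each small caterpillar distances from its own root (resp.\ hole) are already monotone, either by the preceding iteration's preprocessing or, for the one-spine-node caterpillars into which Lemma~\ref{lemma2} splits small active fragments, trivially; this monotonicity translates verbatim to monotonicity from the large $r$ (resp.\ $h$), since distances only differ by the constant $d(r, r_C)$.

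For case~(i), the above observation means the violation $d(r,u') > d(r,u)$ must straddle the boundary between two distinct small caterpillars, with $u' \in C'$ strictly above $u \in C$. Rewriting exactly as in Step~\ref{making distances from the root monotone} yields $y_{u'} > (x_u - x_{u'}) + y_u$, whence $d(u, u') = (x_u - x_{u'}) + y_u + y_{u'} < 2y_{u'} < \lambda^*$. I would then combine this with the cap $y_u < \lambda^*/2$ and the structural fact that $u$ must be the top candidate of $C$ remaining at the moment of pruning (as earlier candidates of $C$, being lower in $d(r_C,\cdot)$, would have been eliminated first), to conclude that $x_u + y_u < \lambda^*/2$, placing $u$ in the unstable prefix. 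Case~(ii) is entirely symmetric under exchanging the roles of $r$ and $h$, and places $u$ in the unstable suffix. For case~(iii), assume without loss of generality that $C_v$ lies above $C$ on the spine. Then the tree-path from $u$ to $v$ crosses $r_C$, so $d(u, v) = d(u, r_C) + d(r_C, v) < \lambda^*$; combining $d(u, r_C) \le d(r_C, u) < \lambda^*/2$ with the structural bounds on $v$ within $C_v$ (using $y_v < \lambda^*/2$ and that $v$ is attached to the spine of $C_v$) forces $r_C$ to sit within $\lambda^*/2$ of $r$ along the spine, and hence $d(r,u) < \lambda^*/2$. If instead $C_v$ lies below $C$, the symmetric argument places $u$ in the unstable suffix.

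The main technical obstacle is case~(i): the violation provides only a pairwise bound $d(u, u') < \lambda^*$, not an individual bound $d(r, u) < \lambda^*/2$. Closing this gap requires the combined use of (a) the monotonicity within each small caterpillar, which forces the violation to cross a small-caterpillar boundary and pins down which spine interval $u$ can inhabit; (b) the universal edge-length cap $y < \lambda^*/2$ on every candidate; and (c) the sequential nature of the top-down pruning, which guarantees that the retained $u'$ immediately above $u$ has a spine coordinate $x_{u'}$ close to $x_u$, translating the local distance bound into a bound on the absolute coordinate~$x_u$.
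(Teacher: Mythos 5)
Your proposal has a genuine gap stemming from aiming for the wrong conclusion. The unstable prefix/suffix in Lemma~\ref{stable infix} are with respect to the \emph{small} caterpillar's own root $r_C$ and hole $h_C$, i.e., the claim is $d(r_C,u)<\lambda^*/2$ or $d(h_C,u)<\lambda^*/2$ (see the paper's proof: ``\emph{the distance of $u_i$ to the hole of its small caterpillar is also less than $\lambda^*/2$}'', and the invariant proof that says ``\emph{in the unstable suffix or the unstable prefix of the small caterpillar}''). Your target $x_u+y_u=d(r,u)<\lambda^*/2$ with $r$ the root of the \emph{large} caterpillar is strictly stronger and in fact can be false: a node $u$ attached far down the large spine can be pruned by an $u'$ in the preceding small caterpillar with a long pendant edge, while $d(r,u)\gg\lambda^*/2$; only $d(r_C,u)$ is bounded. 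This misidentification propagates through all three of your cases.

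Because you pursue the large-caterpillar version, in Case~(i) you are left trying to convert the pairwise bound $d(u,u')<2y_{u'}<\lambda^*$ into the per-node bound $d(r,u)<\lambda^*/2$, which, as you acknowledge, ``requires combined use of'' additional structural facts---but the sketch never actually closes this gap, and as noted it cannot, since the target is false. The correct (and shorter) derivation uses the same coordinates but aims for the small caterpillar directly: from $x_{u'}+y_{u'}>x_u+y_u$ and $x_{u'}\le x_{r_C}$ (because $u'$ is in a preceding caterpillar) one gets in one line $d(r_C,u)=(x_u-x_{r_C})+y_u<y_{u'}<\lambda^*/2$, with no detour through $d(u,u')$ or through ``$u$ is the top remaining candidate of $C$.''

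In Case~(iii) you invoke $y_v<\lambda^*/2$ for the colliding \emph{certain} node $v$. Certain nodes satisfy $y_v\ge\lambda^*/2$; the argument actually needs that \emph{lower} bound. With $v$ above $u$, $d(r_C,v)\ge y_v\ge\lambda^*/2$ since $v$'s spine node lies at or above $r_C$, and then $d(r_C,u)=d(u,v)-d(r_C,v)<\lambda^*-\lambda^*/2=\lambda^*/2$. Your concluding step ``forces $r_C$ to sit within $\lambda^*/2$ of $r$ along the spine, hence $d(r,u)<\lambda^*/2$'' is again chasing the unneeded large-caterpillar claim and does not follow from the given facts.
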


\begin{proof}
Because of symmetry, it is enough to consider a small caterpillar and all the subsequent
small caterpillars. If a node $u_{i}$ of the former is pruned then there is a top node
$v$ of one of the subsequent caterpillars such that $d(h,u_{i}) < d(h,v)$, where $h$ is the hole
of the large caterpillar. But the distance of $v$ from the spine is less than $\frac{\lambda^{*}}{2}$ (since it is a candidate node),
 and it follows that the distance of $u_{i}$ to the hole of its small caterpillar is also less
than $\frac{\lambda^{*}}{2}$. Hence, $u_{i}$ belongs to the unstable suffix.
If on the other hand $u_{i}$ collides with a certain node $v$ of one of the subsequent
caterpillars then $d(u_{i},v)<\lambda^{*}$. But the distance of $v$ from the spine is at least
$\frac{\lambda^{*}}{2}$, so the distance of $u_{i}$ to the hole of its small caterpillar must be
less than $\frac{\lambda^{*}}{2}$. Again, this means that $u_{i}$ belongs to the unstable suffix.
\end{proof}

We define the \emph{interesting prefix} of a caterpillar to be all the nodes whose distance from the top-most node is less than $\lambda^*$.
Similarly, the \emph{interesting suffix} contains all the nodes whose distance from the bottom-most node is less than $\lambda^*$. The interesting suffix contains one additional node -- the lowest node whose distance from the bottom-most node is greater or equal to $\lambda^*$. If such a node does not exist, then the interesting suffix as well as the interesting prefix are the entire caterpillar. We call such a caterpillar \emph{fresh}. Note that the unstable prefix (suffix) of any caterpillar is contained in its interesting prefix (suffix).

\paragraph{The preprocessing procedure.}
We now describe the preprocessing procedure on a given large caterpillar. We start by checking if the large caterpillar is fresh, i.e. if the distance between the bottom-most node and the top-most node is smaller than $\lambda^*$. This can be
done by applying Frederickson's search in parallel for all large caterpillars with $p$ set to
be $\frac{n}{B^{9}}$, using $O(\frac{n}{b^{4}}\cdot \log b \cdot \log B)=O(\frac{n}{b})$
total time for the feasibility tests and creating up to $O(\frac{n}{B^{9}})$ additional active large
fragments. If we discover that a large caterpillar is fresh, there is nothing to update. 
Otherwise, we gather all nodes in the large caterpillar belonging to
an interesting prefix or suffix of some small caterpillar and call them \emph{relevant}. In particular,
all nodes of a fresh small caterpillar are relevant.
We arrange all the relevant nodes in an array and apply Frederickson's search to determine whether the distance between any pair of them is smaller than $\lambda^*$. We again
set $p$ to be $\frac{n}{B^{9}}$, need $O(\frac{n}{b^{4}}\cdot \log b \cdot \log B)=O(\frac{n}{b})$
time for the feasibility tests, and create $O(\frac{n}{B^{9}})$ additional active large fragments.
The time exclusive of the feasibility tests is bounded by the number of relevant nodes.

We next iterate over the small caterpillars from top to bottom, and process each of them while maintaining the following invariant:
\begin{invariant} \label{invariant small caterpillar is correct after prep.} After we process a small caterpillar, each node in its interesting suffix stores the correct information for the case where this node is the bottom-most chosen node in the large caterpillar. 
\end{invariant}

We now describe how to process each small caterpillar assuming the invariant holds for all small caterpillars above it. We then show that the invariant also holds for the just processed small caterpillar. 

\begin{noindlist}
\item \textbf{preprocessing the interesting prefix:}
Before processing a small caterpillar, each node in its interesting prefix stores only itself as the solution for the case where this node is the bottom-most chosen node (since inside the current small caterpillar only one node can be taken from the interesting prefix). In order for us to store the correct information with respect to the large caterpillar (i.e. after concatenating the current small caterpillar with the ones above it) we need to find the bottom-most node in the small caterpillars above s.t. its distance to the current node is greater or equal to $\lambda^*$.
The node we seek is in the interesting suffix of the small caterpillar above the current one. If the above caterpillar is fresh, we also need to look at the interesting suffix of the small caterpillar above it (which might also be fresh) and so on. 
We iterate over the interesting prefix of the current small caterpillar top to bottom. 
For each node in this interesting prefix, we find the closest node at distance at least $\lambda^*$ above it (if it exists), and update the the information stored with the current node of the interesting prefix, using the information stored in the node we found (i.e. add the number of chosen nodes, and copy the certain node and candidate node stored in it). Note that we only look at relevant nodes here, and for each pair of relevant nodes we already know whether the distance between them is smaller than $\lambda^*$ or not.

Since Invariant \ref{invariant small caterpillar is correct after prep.} holds for every small caterpillar above the current one, we now have the correct information stored for every node of the interesting prefix.

\item \textbf{preprocessing the interesting suffix:}
Consider a node of the interesting suffix. Denote the node stored as the candidate node of the solution where this is the bottom-most chosen node by $x$. If there is no candidate node, let $x$ denote the certain node of this solution.
We check if the information stored in node $x$ was updated when we processed the interesting prefix. If it was updated, we update the information stored in the current node (of the interesting suffix) accordingly, and move on to the next node in the interesting suffix. 

If the information stored in node $x$ was not updated, then $x$ is not in the interesting prefix of the current small caterpillar, which implies (by definition of the interesting prefix) that it is the certain node stored for this solution. Moreover, since $x$ does not belong to the interesting prefix, there was a node $y$ above $x$ in the current small caterpillar which was at distance greater or equal to $\lambda^*$ from $x$. Therefore, $y$ had been stored as the candidate node of the solution, and was pruned due to Lemma \ref{lemma3}. Hence, we need to find the bottom-most remaining node above $x$ which is at distance greater or equal to $\lambda^*$ from it. We claim that this node is simply the bottom-most remaining node above the current small caterpillar. Denote this node by $z$. We show that $d(x,z) \ge d(x,y)$ (implying that $d(x,z) \ge \lambda^*$).
We consider the three possible cases:
\begin{itemize}
\item $y$ was pruned because it was closer to the root than some node above it, which we call the removing node. The removing node cannot be above $z$, since otherwise it would either imply that $z$ should also be pruned, or that the removing node is not the cause for $y$'s pruning.
Thus, the removing node is either below $z$, or it is $z$ itself. As shown in Step \ref{making distances from the root monotone} of the preprocessing described in Subsection \ref{Pre-Processing Fragments}, the removing node must be farther from any node below $y$, than $y$. Therefore $z$ is farther from $x$ than $y$.

\item If $y$ was pruned because it was closer to the hole than some node below it, then since $z$ was not pruned, it must be farther from the hole than $y$, and so it is also farther from $x$ than $y$.
\item If $y$ was pruned due to a collision with some certain node, then in the process of pruning nodes s.t. they are sorted by their distance to the hole (which is done before collision elimination), $y$ and $z$ were not pruned. This implies that $z$ is farther from the hole than $y$, and so it is also farther from $x$.

\end{itemize}

Note that since we always remove an entire prefix or suffix during the pruning process, the first remaining node above $y$ must be in a different small caterpillar. We update the information in the current node according to the already computed information stored in that node, and proceed to the next node in the interesting suffix.

\end{noindlist}

\begin{lemma}
Invariant \ref{invariant small caterpillar is correct after prep.} is maintained during the above preprocessing.
\end{lemma}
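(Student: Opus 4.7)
The plan is to prove the invariant by induction on the small caterpillars, traversed top to bottom in the order the preprocessing procedure processes them. The base case is the top-most small caterpillar inside the large caterpillar: for any node in its interesting suffix the solution consisting of just that node is trivially correct (since there are no small caterpillars above it to combine with), and this matches what is initially stored. For the inductive step, I assume the invariant holds for every small caterpillar above the current one and argue that after processing the current small caterpillar, every node in its interesting suffix stores the correct information.

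First I would verify that the interesting-prefix sub-step is correct. For a node $u$ in the current interesting prefix, exactly one node from the current small caterpillar (namely $u$ itself) can be taken; everything else in the solution lies strictly above. By definition of the interesting prefix, the farthest node above that is forced to be excluded lies within distance $\lambda^*$ of $u$, so the nearest chosen node above $u$ must be relevant (it lies in the interesting suffix of some small caterpillar above). Using the precomputed Frederickson information for pairs of relevant nodes, I can identify the bottom-most relevant node $w$ above with $d(u,w)\ge \lambda^*$; by the inductive hypothesis, the information stored at $w$ already encodes the optimal solution restricted to caterpillars above $w$ with $w$ as the bottom-most chosen node. Combining $u$ with this stored information (incrementing the count, copying certain/candidate, possibly updating $u$'s role as candidate vs.\ certain) yields the correct solution for the case where $u$ is the bottom-most chosen node.

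Next I would handle the interesting-suffix sub-step. Let $v$ be a node in the interesting suffix and let $x$ be the candidate node (or certain, if no candidate exists) of the solution currently recorded at $v$ within the small caterpillar. If $x$'s information was updated in the previous sub-step, then by the just-established correctness of the interesting prefix, propagating $x$'s information to $v$ produces the correct combined solution for ``$v$ is the bottom-most chosen node''. Otherwise, $x$ lies outside the interesting prefix of the current small caterpillar, which forces $x$ to be the stored certain node and implies the existence of a node $y$ above $x$ in the current small caterpillar with $d(x,y)\ge \lambda^*$ that was eliminated by Lemma~\ref{lemma3}. The crucial claim is that the bottom-most remaining node $z$ above the current small caterpillar then satisfies $d(x,z)\ge d(x,y)\ge \lambda^*$; the paper's three-case case analysis (pruning by distance to the root, pruning by distance to the hole, or collision with a certain node) gives this inequality, and once I have it I can invoke the inductive hypothesis at $z$ to propagate the correct solution to $v$.

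The main obstacle is the last claim, namely that the deletion of $y$ together with the pruning invariants established in Lemma~\ref{lemma3} really does force $d(x,z)\ge d(x,y)$. Verifying the three subcases requires reusing the monotonicity properties from Step~\ref{making distances from the root monotone} and Step~\ref{making distances from the hole monotone} of Subsection~\ref{Pre-Processing Fragments}, together with the observation that pruning removes whole prefixes or suffixes of small caterpillars so the first remaining node above $y$ is necessarily in a strictly higher small caterpillar; this is exactly where the order of operations in Lemma~\ref{lemma3} (root-distance pruning, then hole-distance pruning, then collision elimination) is used. Once these geometric facts are checked, the inductive step closes and the invariant follows.
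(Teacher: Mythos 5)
Your plan mostly re-derives the correctness of the two processing sub-steps and closes with the inductive hypothesis, but it misses the one concern that the paper's proof actually isolates and addresses. The information ``currently recorded at $v$'' was computed back when the small caterpillar was preprocessed on its own, \emph{before} the pruning and collision-elimination of Lemma~\ref{lemma3} ran on the concatenated large caterpillar. So the pointer $x$ stored at $v$ may refer to a node of the current small caterpillar that no longer exists. Your case split has exactly two branches --- ``$x$'s information was updated'' or ``$x$ lies outside the interesting prefix, hence $x$ is the certain node'' --- and both tacitly assume $x$ is still present. If $x$ was the stored candidate node and was pruned, the first branch does not trigger (a pruned node has no updated information), and the second branch draws the wrong conclusion (a pruned $x$ is not ``the certain node''). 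This is precisely the hole the paper's proof plugs, via Lemma~\ref{stable infix}: a pruned node must sit in the unstable prefix or suffix; if in the unstable suffix, any $v$ that would store it lies even deeper in that pruned suffix and hence was pruned too, so the reference is never consulted; if in the unstable prefix, its root-distance is below $\lambda^*/2$ so it can only be a stored \emph{candidate} node, and the procedure explicitly checks that the stored candidate still exists before using it.

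Beyond that missing case, the rest of your argument (the single-node observation for the interesting prefix, the use of precomputed Frederickson comparisons to locate the nearest relevant node above, the $d(x,z)\ge d(x,y)$ claim with its three-way case analysis) mirrors the procedure description rather than the lemma's proof; the paper treats those as already established by the text preceding the lemma and uses the lemma only for the stale-pointer issue. To complete your proof you would need to add the third case --- $x$ pruned --- and argue, via Lemma~\ref{stable infix} exactly as above, that the procedure never propagates information from a deleted node.
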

\begin{proof}
We assume that the invariant holds for every small caterpillar above the current one, and show that after the preprocessing it also holds for the current one. During the preprocessing we update the information stored in the nodes of the interesting suffix by using the information stored in nodes of previously processed caterpillars (where the information is already correct), so we only need to show that we do not use nodes of the current small caterpillar that were pruned due to Lemma \ref{lemma3}. Any such pruned node is either in the unstable suffix or in the unstable prefix of the small caterpillar. If the node is in the pruned suffix, it cannot be stored in a solution where the bottom-most node is above the pruned suffix, and so any node where it might be stored was also pruned. If the pruned node is in the unstable prefix, its distance to the root is less than $\frac{\lambda^*}{2}$, and so if it is stored in some solution, it must be the candidate node for that solution. Because we always check if the stored candidate node for some node in the interesting suffix still exists, we do not use such pruned nodes in the solutions we store.
\end{proof}

We now have the correct information stored in every node of the interesting suffix of the large caterpillar. These are the only nodes that might be queried when running a feasibility test (since a node above the interesting suffix cannot be the bottom-most chosen node in the caterpillar).

\paragraph{The preprocessing analysis.}
It remains to analyze the running time of this preprocessing procedure. We claim that it is amortizes $O(n)$. 
In every step, we spend time which is linear in the number of relevant nodes. Some of these nodes will not be relevant in the next iteration, but some of them might be relevant for many iterations. Observe that in each iteration, unless the large caterpillar is fresh (in which case we do not iterate over the relevant nodes), the nodes that will be relevant in the next iteration are only the ones in the interesting suffix and prefix of the resulting large caterpillar. The interesting prefix (and similarly the interesting suffix) of the resulting large caterpillar
consists of nodes belonging to fresh small caterpillars and at most one interesting prefix of
a small caterpillar. The nodes belonging to fresh small caterpillars have not been processed as relevant nodes until now, but the nodes belonging to interesting prefixes (and suffixes) of small caterpillars have been, and such nodes might be relevant for many future iterations. Fortunately, since the number of nodes in an interesting prefix or suffix is bounded by the number of nodes in a small caterpillar, which is $O(b^{5})$, and since there is at most one interesting prefix (suffix) of a small caterpillar whose nodes will stay relevant, this sums up to $O(\frac{n}{B^{4}}\cdot b^{5}) = O(\frac{n}{b})$ per iteration, and $O(n)$ overall.

To conclude, we have shown how to obtain an $O(\frac{n}{B^{4}}\log B)$ time test given an $O(\frac{n}{b^{4}}\log b)$
time test, in $O(\frac{n}{b})$ time plus amortized constant time per
node. After $\log^*n$ iterations, we have an  $O(\frac{n}{\log ^4n} \cdot \log \log n)$ test,
which we use to solve the dispersion optimization problem in linear time. All iterations of the preprocessing
take $O(n)$ time altogether.

\section{The Weighted Dispersion Problem}\label{section:weighted}

In this section we present an $O(n\log n)$ time algorithm for the weighted search problem and prove that it is optimal, by giving an 
$\Omega(n \log n)$ lower bound in the algebraic computation tree model.

Recall that $f(P)=\min_{u,v\in P}\{d(u,v)\}$. The weighted dispersion problems are defined as follows.
\begin{itemize}
\item  {\em The Weighted Dispersion Optimization Problem.}  Given a tree with non-negative edge lengths, non-negative node weights, and a number $W$, find a subset $P$ of nodes of weight at least $W$ such that  $f(P)$ is maximized. 
\item  {\em The Weighted Dispersion Search Problem (feasibility test).}  Given a tree with non-negative edge lengths, non-negative node weights, a number $W$, and a number $\lambda$, find a subset $P$ of nodes of weight at least $W$ such that  $f(P)\geq\lambda$, or declare that no such subset exists. 
\end{itemize}
As explained in the introduction, an $O(n \log n)$ time algorithm for the feasibility test implies an $O(n\log^{2}n)$ time solution for the optimization problem.

\subsection{An \texorpdfstring{\boldmath$ \Omega (n\log n)$}{Omega(nlogn)} lower bound for the weighted feasibility test}\label{weighted f.t. lower bound}

In the {\em Set Disjointness} problem, we are given two sets of real numbers $X=\lbrace x_1,x_2,\ldots,x_n \rbrace$ and $Y=\lbrace y_1,y_2,\ldots,y_n \rbrace$ and we need to return true iff $X \cap Y = \emptyset$. 
It is known that solving the set disjointness problem requires $\Omega(n \log n)$ operations in the algebraic
computation tree model~\cite{BenOr}.
We present a reduction from the set disjointness problem to the weighted dispersion search problem. We assume without loss of generality that the elements of $X$ and $Y$ are non-negative (otherwise,
we find the minimum element and subtract it from all elements).

Given the two sets $X$ and $Y$, we construct a tree $T$ as follows. We first compute and set
$K := 2 \cdot \max (X\cup Y) +3$.
The tree $T$ contains two vertices of weight zero, $u$ and $v$, that are connected by an edge of length $\frac{K}{2}$. In addition, for each element $x_i \in X$, we have a node of weight $x_i+1$, that is connected to $u$ by an edge of length $\frac{K}{2} - x_i -1$. For each element $y_i \in Y$, we have a node of weight $K - y_i -1$, that is connected to $v$ by an edge of length $y_i+1$.
\begin{lemma}\label{lemma of the reduction to set disjointness}
$X \cap Y = \emptyset$ iff the weighted feasibility test returns false for $T$ with  parameters $W=K$ and $\lambda =K$.
\end{lemma}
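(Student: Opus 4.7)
The plan is to enumerate all pairwise distances in the small tree $T$, identify which pairs of nodes can coexist in a valid set $P$ for $\lambda = K$, and then see what weight constraints force. Let me write $a_i$ for the leaf attached to $u$ corresponding to $x_i\in X$ and $b_j$ for the leaf attached to $v$ corresponding to $y_j\in Y$. First I would compute the following distances directly from the construction:
\[
d(a_i,a_j)=K-x_i-x_j-2,\quad d(b_i,b_j)=y_i+y_j+2,\quad d(a_i,b_j)=K+y_j-x_i,
\]
and $d(u,v)=K/2$, $d(u,a_i)=K/2-x_i-1$, $d(v,b_j)=y_j+1$, $d(u,b_j)=K/2+y_j+1$, $d(v,a_i)=K-x_i-1$.

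The key observation is that all of the above quantities are strictly less than $K$ except possibly $d(a_i,b_j)$. Indeed, using $x_i,y_j\ge 0$ and $K=2\max(X\cup Y)+3$, every other distance is bounded by a quantity strictly below $K$ (for instance $y_i+y_j+2\le 2\max(X\cup Y)+2<K$, and $K/2+y_j+1\le K/2+\max(X\cup Y)+1<K$). So a valid $P$ with $f(P)\ge K$ can contain neither $u$, nor $v$, nor two $a$'s, nor two $b$'s, nor an $a_i$ together with $v$, nor a $b_j$ together with $u$. Consequently $P$ is contained in $\{a_i,b_j\}$ for a single pair $(i,j)$.

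Next I would analyze what $P=\{a_i,b_j\}$ requires. The feasibility constraint $d(a_i,b_j)\ge K$ becomes $y_j\ge x_i$. The weight constraint $(x_i+1)+(K-y_j-1)\ge K$ becomes $x_i\ge y_j$. Both together are equivalent to $x_i=y_j$, i.e.\ to $x_i\in X\cap Y$. Singleton sets and the empty set obviously fail to reach weight $K$, since the largest single weight is $\max(K-y_j-1,x_i+1)<K$. So a valid $P$ exists iff $X\cap Y\ne\emptyset$, which proves the claimed equivalence.

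The argument is essentially a case check, so there is no substantial obstacle; the only thing to be careful about is the choice of $K$, which must be large enough to make every ``bad'' pairwise distance strictly less than $K$ while still letting $d(a_i,b_j)\ge K$ hold precisely when $y_j\ge x_i$. The choice $K=2\max(X\cup Y)+3$ achieves both, with the additive $+1$ shifts on weights and edge lengths ensuring that all weights and edge lengths are non-negative even when some $x_i$ or $y_j$ equals $0$. Combined with the $\Omega(n\log n)$ lower bound for set disjointness in the algebraic computation tree model, this reduction yields the claimed lower bound for the weighted feasibility test.
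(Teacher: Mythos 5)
Your proof is correct and follows essentially the same approach as the paper: reduce to the case $P=\{a_i,b_j\}$ by showing all other pairwise distances (and singleton weights) fall short, then observe that the distance constraint forces $y_j\geq x_i$ while the weight constraint forces $x_i\geq y_j$, hence $x_i=y_j$. Your version is slightly more thorough than the paper's in that you explicitly rule out $u$ and $v$ appearing in a feasible $P$ by computing $d(u,v)$, $d(u,b_j)$, and $d(v,a_i)$, whereas the paper handles these cases only implicitly.
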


\begin{proof}
We start by proving that if $X \cap Y \neq \emptyset$ then there is a subset of the vertices of the tree, $P \subseteq T$, s.t. $W(P) \geq K$ and $f(P) \geq K$ (where $W(P)$ is the sum of the weights of all the vertices in $P$). Suppose that  $x_i=y_j$. Then, $d(x_i,y_j) = d(x_i,u) + d(u,v) + d(v,y_j) = \frac{K}{2} - x_i -1 + \frac{K}{2} + y_j +1 = K \geq K$. Furthermore, $W(x_i) + W(y_j) = x_i + 1 + K- y_j -1 = K \geq K$, and so the feasibility test should return true due to $P = \lbrace x_i, y_j \rbrace$.

Now we need to prove that if the feasibility test returns true, the sets are not disjoint.
We start by proving that any feasible subset $P$ cannot have more than one vertex from each of the sets.
Assume for contradiction that $P'$ is a feasible subset that contains at least two vertices that correspond to elements of $X$. Denote these two vertices by $x_i$ and $x_j$. We assume that $P'$ is a feasible solution, and so $d(x_i,x_j) \geq K$. But, $d(x_i,x_j) = \frac{K}{2} - x_i -1 + \frac{K}{2} - x_j -1 = K - x_i - x_j -2 < K$ (since $x_i$ and $x_j$ are non-negative) and we have a contradiction.
If $P'$ contains two elements of $Y$, denoted $y_{i}$ and $y_{j}$, we also obtain a contradiction because
$d(y_{i},y_{j})=y_{i}+1+y_{j}+1<K$.
Observe that it is also not possible for $P$ to contain only one node, since the weight of any node is strictly less than $K$. So, we now only need to prove that if $P= \{ x_i,y_j \}$ is a feasible solution, then $x_i = y_j$.
If $P= \{ x_i,y_j \}$ is a feasible solution then on the one hand $d(x_i,y_j) = \frac{K}{2} - x_i -1 + \frac{K}{2} + y_j +1 = K - x_i + y_j \geq K$, implying that $y_j \geq x_i$. On the other hand, $W(P) = x_i + 1 + K - y_j -1 = K + x_i - y_j \geq K$, implying that $x_i \geq y_j$. We conclude that $x_i = y_j$.
\end{proof}

\subsection{An \texorpdfstring{\boldmath$ O(n\log n)$}{O(nlogn)} time algorithm for the weighted feasibility test}\label{weighted f.t.}

Similarly to the unweighted case, we compute for each node of the tree, the subset of nodes $P$ in its subtree  s.t. $f(P) \geq \lambda$ and the total  weight of  $P$ is maximized. We compute this by going over the nodes of the tree bottom-up. Previously, the situation was simpler, as for any subtree we had just one candidate node (i.e., a node that may or may not be in the optimal solution for the entire input tree). This was true because nodes had uniform weights. Now however, there could be many candidates in a subtree, as the certain nodes are only the ones that are at distance at least $\lambda$ from the root of the subtree (and not $\frac{\lambda}{2}$ as in the unweighted case).

Let $P$ be a subset of the nodes in the subtree rooted at $v$, and  $h$ be the node in $P$ minimizing $d(h,v)$. We call $h$ the {\em closest chosen node} in $v$'s subtree. In our feasibility test, $v$ stores an optimal solution $P$ for each possible value of $d(h,v)$ (up to $\lambda$, otherwise the closest chosen node does not affect nodes outside the subtree). That is, a subset of nodes $P$ in $v$'s subtree, of maximal weight, s.t. the closest chosen node is at distance {\em at least} $d(h,v)$ from $v$, $f(P) \ge \lambda$. 
This can be viewed as a monotone polyline, since the weight of $P$ (denoted $W(P)$) only decreases as the distance of the closest chosen node increases (from 0 to $\lambda$). $W(P)$ changes only at certain points called {\em breakpoints} of the polyline. Each point of the polyline is a key-value pair, where the key is $d(h,v)$ and the value is $W(P)$. We store with each breakpoint the value of the polyline between it and the next breakpoint,
i.e., for a pair of consecutive breakpoints with keys $a$ and $a+b$, the polyline value of the interval $(a,a+b]$ is associated with the former.
The representation of a polyline consists of its breakpoints, and the value of the polyline at key 0.

\begin{figure}[h]
\begin{center}
\includegraphics[scale=0.55]{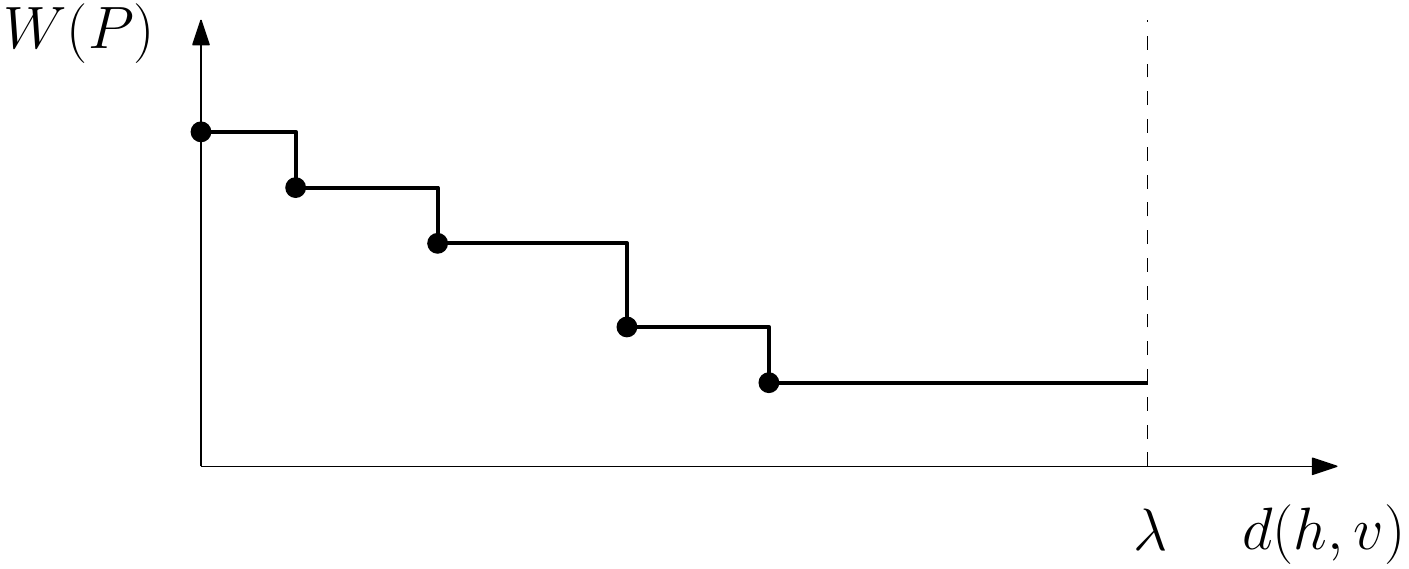}
\end{center}
\caption{The polyline represents the weight of the optimal solution $P$ as a function of the distance of the closest chosen node in the subtree. 
The weight of $P$ only decreases at certain points called breakpoints (in bold). Each breakpoint stores the value in the interval between itself and the next breakpoint.
\label{figure of a polyline}}
\end{figure}

The algorithm computes such a polyline for the subtrees rooted at every node $v$ of the tree by merging the polylines computed for the subtrees rooted at $v$'s children. We assume w.l.o.g. that the input tree is binary (for the same reasoning as in the unweighted case), and show how to
implement this step in time $O(x \log (\frac{2y}{x}))$, where $x$ is the number of breakpoints in the polyline with fewer breakpoints, and $y$ is the number of breakpoints in the other polyline.

\subsubsection{Constructing a polyline.} We now present a single step of the algorithm. We postpone the discussion of the data structure used to store the polylines for now, and first describe how to obtain the polyline of $v$ from the polylines of its children. Then, we state the exact interface of the data structure that allows executing such a procedure efficiently, show how to implement such an interface, and finally analyze the complexity of the resulting algorithm.

If $v$ has only one child, $u$, we build $v$'s polyline by querying $u$'s polyline for the case that $v$ is in the solution (i.e., query $u$'s polyline with distance of the closest chosen node being $\lambda-d(v,u)$), and add to this value the weight of $v$ itself. We then construct the polyline by taking the obtained value for $d(h,v)=0$ and merging it with the polyline computed for $u$, shifted to the right by $d(v,u)$ (since we now measure the weight of the solution as a function of the distance of the closest chosen node to $v$, not to $u$). The value between zero and $d(v,u)$ will be the same as the value of the first interval in the polyline constructed for $u$, so the shift is actually done by increasing the keys of all but the first breakpoint by $d(v,u)$. Note that it is possible for the optimal solution in $v$'s subtree not to include $v$. Therefore we need to check, whether the value stored at the first breakpoint (which is the weight of the optimal solution where $v$ is not included) is greater than the value we computed for the case $v$ is chosen. If so, we store the value of the first breakpoint also as the value for key zero.

If $v$ has a left child $u_1$ and a right child  $u_2$, we have two polylines $p_1$ and $p_2$ (that represent the solutions inside the subtrees rooted at $u_1$ and $u_2$), and we want to create the polyline $p$ for the subtree rooted at $v$. Denote the number of breakpoints in $p_1$ by $x$ and the number of breakpoints in $p_2$ by $y$. Assume w.l.o.g. that $x \leq y$.
We begin with computing the value of $p$ for key zero (i.e. $v$ is in the solution). In this case we query $p_1$ and $p_2$ for their values with keys $\lambda - d(v,u_1)$ and $\lambda - d(v,u_2)$ respectively (if one of these is negative, we take zero instead), and add them together with the weight of $v$. As in the case where $v$ has only one child, it is possible for the optimal solution in $v$'s subtree not to include $v$ itself. Therefore we need to check, after constructing the rest of the polyline, whether the value stored at the first breakpoint is greater than the value we computed for the case $v$ is chosen, and if so, store the value of the first breakpoint as the value for key zero.

It remains to construct the rest of the polyline $p$. Notice that we need to maintain that $d(h_1,h_2) \geq \lambda$ (where $h_1$ is the closest chosen node in $u_1$'s subtree and $h_2$ is the closest chosen node in $u_2$'s subtree). We start by shifting $p_1$ and $p_2$ to the right by $d(v,u_1)$ and $d(v,u_2)$ respectively, because now we measure the distance of $h$ from $v$, not from $u_1$ or $u_2$. We then proceed in two steps, each computing half of the polyline $p$.

\vspace{0.04in} \noindent {\bf Constructing the second half of the polyline.}\label{subsection constructing the second half of the polyline} We start by constructing the second half of the polyline, where $d(h,v) \geq \frac{\lambda}{2}$. In this case we query both polylines with the same key, since $d(h_1,v) \geq \frac{\lambda}{2}$ and $d(h_2,v) \geq \frac{\lambda}{2}$ implies that $d(h_1,h_2) \geq \lambda$. The naive way to proceed would be to iterate over the second half of both polylines in parallel, and at every point sum the values of the two polylines. This would not be efficient enough, and so we only iterate over the breakpoints in the second half of $p_1$ (the smaller polyline). These breakpoints induce intervals of $p_2$. For each of these intervals we increase the value of $p_{2}$ by the value in the interval in $p_1$ (which is constant). See Figure~\ref{figure of constructing the second half of the polyline}. This might require inserting some of the breakpoints from $p_1$, where there is no such breakpoint already in $p_2$. Thus, we obtain the second half of the monotone polyline $p$ by modifying the second half of the monotone polyline $p_{2}$.

\begin{figure}[h]
\begin{center}
\includegraphics[scale=.55]{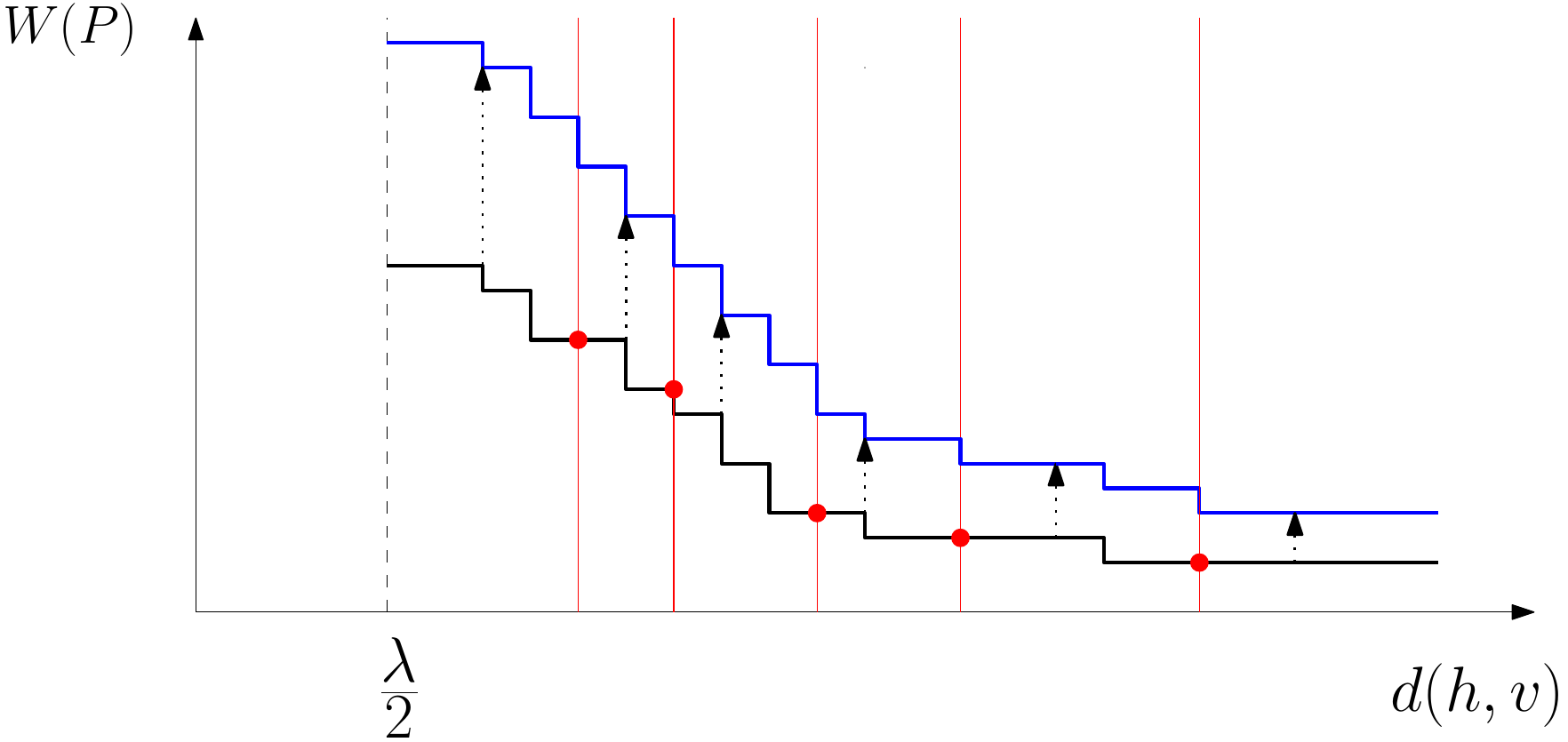}
\end{center}
\caption{Constructing the second half of the polyline $p$ (in blue). The black polyline is $p_2$. The breakpoints inserted from $p_1$ are in red. In each interval (between consecutive red lines) we raise the polyline by the value in $p_1$.
\label{figure of constructing the second half of the polyline}}
\end{figure}

\vspace{0.04in} \noindent {\bf Constructing the first half of the polyline.} We now need to consider two possible cases: either $d(h_1,v) < d(h_2,v)$ (i.e. the closest chosen node in $v$'s subtree is inside $u_1$'s subtree), or $d(h_1,v) > d(h_2,v)$ ($h$ is in $u_2$'s subtree). Note that in this half of the polyline $d(h,v)<\frac{\lambda}{2}$, and therefore $d(h_1,v) \neq d(h_2,v)$. For each of the two cases we will construct the first half of the polyline, and then we will take the maximum of the two resulting polylines at every point, in order to have the optimal solution for each key.

\vspace{0.04in} \noindent {\bf Case I: \boldmath$d(h_1,v) < d(h_2,v)$.} Since we are only interested in the first half of the polyline, we know that $d(h_1,v) < \frac{\lambda}{2}$. Since $d(h_2,v) +d(h_1,v)\geq \lambda$ we have that  $d(h_2,v) > \frac{\lambda}{2}$. Again, we cannot afford to iterate over the breakpoints of $p_2$, so we need to be more subtle.

We start by splitting $p_1$ at $\frac{\lambda}{2}$ and taking the first half (denoted by $p_1'$). We then split $p_2$ at $\frac{\lambda}{2}$
and take the second half (denoted by $p_2'$). Consider two consecutive breakpoints of $p_1'$ with keys $x$ and $x+y$. We would like to increase the value of $p_1'$ in the interval $(x,x+y]$ s.t. the new value is the maximal weight of a valid subset of nodes from \emph{both} subtrees rooted at $u_1$ and $u_2$, s.t. $x < d(h_1,v) \leq x+y$. Therefore $d(h_2,v) \ge \lambda-x-y$. $p_2'$ is monotonically decreasing, and so we query it at $\lambda-x-y$, and increase by the resulting value.

This process might result in a polyline which is not monotonically decreasing, because as we go over the intervals of $p_1'$ from left to right we increase the values there more and more.
To complete the construction, we make the polyline monotonically decreasing by scanning it from $\frac{\lambda}{2}$ to zero and deleting unnecessary breakpoints. We can afford to do this, since the number of breakpoints in this polyline is no larger than the number of breakpoints in $p_1$.
Note that we have assumed we have access to the original data structure representing $p_{2}$, but this structure has been  modified to obtain the second half of $p$. However, we started with computing the second half of $p$ only to make the description simpler. We can simply start with the first half.

\vspace{0.04in} \noindent {\bf Case II: \boldmath$d(h_1,v) > d(h_2,v)$.}
Symmetrically to the previous case, we increase the values in the intervals of $p_2$ induced by the breakpoints of $p_1$ by the appropriate values of $p_{1}$ (similarly to what we do in the construction of the second half of the polyline). Again, the resulting polyline may be non-monotone, but this time we cannot solve the problem by scanning the new polyline and deleting breakpoints, since there are too many of them. Instead, we go over the breakpoints of the second half of $p_1$. For each such breakpoint  with key $k$, we check if the new polyline has a breakpoint with key $\lambda - k$. If so, denote its value by $w$, otherwise continue to the next breakpoint of $p_1$. These are points where we might have increased the value of $p_2$. We then query the new polyline with a \emph{value predecessor} query: this returns the breakpoint with the largest key s.t. its key is smaller than $\lambda - k$ and its value is at least $w$. 
If this breakpoint exists, and it is not the predecessor of the breakpoint at $\lambda - k$, then the values of the new polyline between its successor breakpoint and $\lambda - k$ should all be $w$ (i.e. we delete all breakpoints in this interval and set the successor's value to $w$). If it does not exist, then the values between zero and $\lambda - k$ should be $w$ (i.e. we delete all the previous breakpoints). 
This ensures that the resulting polyline is monotonically decreasing. 
See  Figure~\ref{figure of the second case in the construction of the first half of the polyline}.

\begin{figure}[h]
\begin{center}
\includegraphics[scale=0.6]{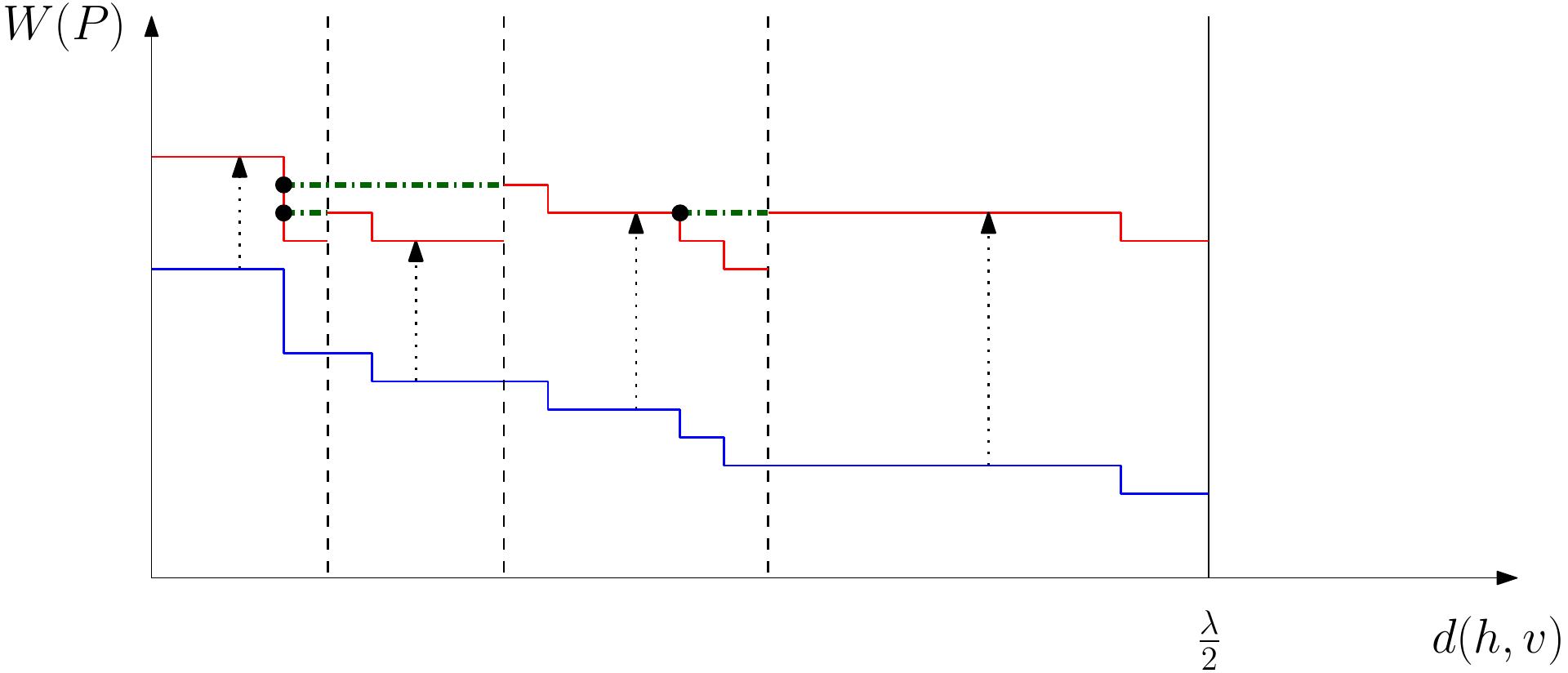}
\end{center}
\caption{Case II of constructing the first half of the polyline. The first half of $p_2$ is in blue. The vertical black dashed lines are the breakpoints of $p_1$. We increase the values of $p_{2}$ and obtain the red polyline, which is not monotone. To make it monotone, we delete all the breakpoints below the green intervals, which are found with value predecessor queries (the results of these queries are the bold points).\label{figure of the second case in the construction of the first half of the polyline}
}
\end{figure}

\vspace{0.04in} \noindent {\bf Merging cases I and II.}
We now need to build one polyline for the first half of the polyline, taking into account both cases. Let $p_a$ and $p_b$ denote the polylines we have constructed in cases I and II respectively (so the number of breakpoint in $p_a$ is at most $O(x)$, the number of breakpoints in $p_b$ is at most $O(y)$, and recall that $x\leq y$).

We now need to take the maximum of the values of $p_a$ and $p_b$, for each key. We do this by finding the intersection points of the two polylines. Notice that since both polylines are monotonically decreasing, these intersections can only occur at (i) the breakpoints of $p_a$, and (ii) at most one point between two consecutive breakpoints of $p_a$.

We iterate over $p_a$ and for each breakpoint, we check if the value of $p_b$ for the same key is between the values of this breakpoint and the predecessor breakpoint in $p_a$. If so, this is an intersection point. Then, we find the intersection points which are between breakpoints of $p_a$, by running a value predecessor query on $p_b$ for every breakpoint in $p_a$ except for the first. That is, for a breakpoint of $p_a$ with key $k$, denote the value of $p_a$ between it and its predecessor breakpoint by $v$. We find the point in $p_b$ with the largest key, s.t. its key is smaller than $k$, and its value is at least $v$. If such a point exists, and its key lies between the keys of the two consecutive breakpoints of $p_a$, and the value of the breakpoint of $p_b$ with this key is smaller than $v$, then it is an intersection point. Note that such a value predecessor query always returns a key corresponding to a breakpoint (or it returns NULL).

After such computation, we know which polyline gives us the best solution for every point between zero and $\frac{\lambda}{2}$, and where are the intersection points where this changes. We can now build the new polyline by doing insertions and deletions in $p_b$ according to the intersection points: For every interval of $p_b$ defined by a pair of consecutive intersection points, we check if the value of $p_a$ is larger than the value of $p_b$ in the interval, and if so, delete all the breakpoints of $p_b$ in the interval, and insert the relevant breakpoints from $p_a$. The number of intersection points is linear in the number of breakpoints of $p_a$, and so the total number of interval deletions and insertions is $O(x)$.

To conclude, the final polyline $p$ is obtained by concatenating the value computed for key zero, the polyline computed for the first half, and the polyline computed for the second half.

\subsection{The polyline data structure} We now specify the data structure for storing the polylines. The required interface is:
\begin{enumerate}
\item \label{op1} Split the polyline at some key.
\item \label{op2} Merge two polylines (s.t. all  keys in one polyline are smaller than all keys in the other).
\item \label{op3} Retrieve the value of the polyline for a certain key $d(h,v)$.
\item \label{op4}Return a sorted list of the breakpoints of the polyline.
\item \label{op5} Batched interval increase -- Given a list of disjoint intervals of the polyline, and a number for each interval, increase the values of the polyline in each interval by the appropriate number. Each interval is given by the keys of its endpoints.
\item \label{op6} Batched value predecessor -- Given a list of key-value pairs, $(k_i,v_i)$, find for each $k_i$, the maximal key $k_{i}'$, s.t. $k_{i}' < k_i$ and the value of the polyline at $k_{i}'$ is at least $v_i$, assuming that the intervals $(k_{i}',k_{i})$ are disjoint.
\item \label{op7} Batched interval insertions -- Given a list of pairs of consecutive breakpoints in the polyline, insert between each pair a list of breakpoints.
\item \label{op8} Batched interval deletions -- Given a list of disjoint intervals of the polyline, delete all the breakpoints inside the intervals.
\end{enumerate}
We now describe the data structure implementing the above interface. We represent a polyline by storing its breakpoints in an augmented 2-3 tree, where the data is stored in the leaves. Each node stores a key-value pair, and we maintain the following property: the key of each breakpoint is the sum of the keys of the corresponding leaf and all of its ancestors, and similarly for the values. In addition, we store in each node the maximal sum of keys and the maximal sum of values on a path from that node to a leaf in its subtree. We also store in each node the number of leaves in its subtree. Operations~\ref{op1} and~\ref{op2} use standard split and join procedures for 2-3 trees in logarithmic time. 
Operation~\ref{op3} runs a predecessor query and returns the value stored at the returned breakpoint in logarithmic time.
Operation~\ref{op4} is done by an inorder traversal of the tree. This takes $O(x)$ time, since we only iterate over the breakpoints of $p_{1}$. 
Operations~\ref{op1}-\ref{op4} are performed only a constant number of times per step, and so their total cost is $O(\log x + \log y + x)$. The next four operations are more costly, since they consists a batch of $O(x)$ operations. The input for the batched operations is given in sorted order (by keys).

\vspace{0.04in} \noindent {\bf Operation~\ref{op5} -- batched interval increase.}
Consider the following implementation for Operation \ref{op5}. We iterate over the intervals, and for each of them, we find its left endpoint, and traverse the path from the left endpoint, through the LCA, to the right endpoint.
The traversal is guided by the maximal key field stored in the current node. Using this field, we find the maximal key of a breakpoint stored in the node's subtree by adding the sum of all keys from the root to the current node, which is maintained in constant time after moving to a child or the parent.
While traversing the path from the left endpoint to the LCA (from the LCA to the right endpoint), we increase the value of every node hanging to the right (left) of this path. We also update the maximal value field in each node we reach (including the nodes on the path from the LCA to the root). Notice that if one of the endpoints of the interval is not in the structure, we need to insert it. We might also need to delete a breakpoint if it is a starting point of some interval and its new value is now equal to the value of its predecessor. This implementation would take time which is linear in the number of traversed nodes, plus the cost of insertions and deletions (whose number is linear in the number of intervals). Because the depth of a 2-3 tree of size $O(y)$ is $O(\log y)$, this comes up to $O(x \log y$). Such time complexity for each step would imply $O(n\log^{2}n)$ total time for the feasibility test.

We improve the running time by performing the operations on smaller trees. The operation therefore begins by splitting the tree into $O(x)$ smaller trees, each with $O(\frac{y}{x})$ leaves. This is done by recursively splitting the tree, first into two trees with $O(\frac{y}{2})$ leaves, then we split each of these trees into two trees with $O(\frac{y}{4})$ leaves, and so on, until we have trees of size $O(\frac{y}{x})$. We then increase the values in the relevant intervals using the small trees. For this, we scan the roots of the small trees, searching for the left endpoint of the first interval (by using the maximal key field stored in the root of each tree). Once we have found the left endpoint of the interval, we check if the right endpoint of the interval is in the same tree or not (again, using the maximal key). In the first case, the interval is contained in a single tree, and can be increased in this tree in time $O(\log(\frac{2y}{x}))$\footnote{The height of each small tree is actually bounded by $O(\log (\ceil{\frac{y}{x}}) = O(\log (\frac{2y}{x}))$.} using the procedure we have previously described. In the second case, the interval spans several trees, and so we need to do an interval increase in the two trees containing the endpoints of the interval, and additionally increase the value stored in the root of every tree that is entirely contained in the interval. We then continue to the next interval, and proceed in the same manner. 
Since the intervals are disjoint and we do at most two interval increases on small trees per interval, the total time for the increases in the small trees is $O(x \cdot \log (\frac{2y}{x}))$. Scanning the roots of the small trees adds $O(x)$ to the complexity, leading to 
 $O(x \cdot \log (\frac{2y}{x}) + x) = O(x \log (\frac{2y}{x}))$ overall for processing the small trees.

Before the operation terminates, we need to join the small trees to form one large tree. This is symmetric to splitting and analyzed with the same
calculation.
\begin{lemma}
\label{running time to obtain small trees lemma}
The time to obtain the small trees is $O(x \log (\frac{2y}{x}))$.
\end{lemma}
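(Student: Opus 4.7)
The plan is to analyze the recursive splitting procedure that was described just before the lemma and show that its cost forms a geometric sum dominated by its top levels. At level $i$ of the recursion (starting from $i=0$), the tree has been split into $2^i$ subtrees, each containing $\Theta(y/2^i)$ leaves. To go from level $i$ to level $i+1$ we perform $2^i$ split operations, one per subtree, each on a $2$--$3$ tree of size $\Theta(y/2^i)$ and therefore costing $O(\log(y/2^i))$ using the standard split procedure. The recursion halts at level $\ell=\lceil\log x\rceil$, at which point we have $\Theta(x)$ subtrees of size $\Theta(y/x)$ as required.

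The total running time is therefore
\begin{equation*}
T \;=\; \sum_{i=0}^{\lceil\log x\rceil} 2^{i}\,\log\!\left(\tfrac{y}{2^{i}}\right)
\;=\; \sum_{i=0}^{\lceil\log x\rceil} 2^{i}\bigl(\log y - i\bigr).
\end{equation*}
I would evaluate this by the substitution $j=\lceil\log x\rceil-i$, so that $2^{i}=\Theta(x/2^{j})$ and $\log y - i = \log(y/x)+j+O(1)$. This rewrites $T$ as
\begin{equation*}
T \;=\; O\!\left( x\,\log\!\tfrac{y}{x} \sum_{j\ge 0} 2^{-j} \;+\; x \sum_{j\ge 0} j\cdot 2^{-j}\right)
\;=\; O\!\left(x\log\tfrac{y}{x}+x\right)
\;=\; O\!\left(x\log\tfrac{2y}{x}\right),
\end{equation*}
using the fact that both $\sum 2^{-j}$ and $\sum j\,2^{-j}$ are $O(1)$. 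The additive $+x$ term is exactly what forces the $\log(2y/x)$ form instead of $\log(y/x)$: when $y/x$ is a small constant (say $y=x$), the sum still costs $\Omega(x)$ because each of the $x$ split operations at the bottom level does a constant amount of work, and $\log(2y/x)=\Theta(1)$ in this regime absorbs it cleanly.

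The main obstacle is not the calculation itself but confirming that each individual split truly costs only $O(\log s)$ on the $2$--$3$ tree augmented with the extra fields (maximum-key, maximum-value, and subtree size stored along root-to-leaf paths). Since splitting a $2$--$3$ tree along a key follows a single root-to-leaf path and reassembles at most $O(\log s)$ dangling subtrees using joins, and the auxiliary fields are maintained by recomputing them at each touched node in constant time, the split cost remains $O(\log s)$ as claimed, and the analysis above goes through. A symmetric argument bounds the later re-joining phase mentioned in the text by the same quantity, which is why the lemma needed only to isolate the splitting cost.
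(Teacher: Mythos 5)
Your argument sets up exactly the same recursive-splitting sum $\sum_{i} 2^{i}\log(y/2^{i})$ as the paper's proof and arrives at the same bound; the only difference is that you evaluate the sum by the re-indexing $j=\lceil\log x\rceil-i$ and geometric-decay estimates, whereas the paper computes it in closed form via $\sum_{i}i\cdot 2^{i}$. Both are correct and routine, so this is essentially the same proof, with your version perhaps giving slightly cleaner intuition for why the sum is dominated by the bottom levels and why the $\log(2y/x)$ (rather than $\log(y/x)$) is needed to absorb the $O(x)$ additive term.
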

\begin{proof}
We start with a tree that has $y$ leaves, select the middle leaf by traversing from the root and using the number of leaves stored in every node of the tree, and then split the tree into two in $O(\log y)$ time. Then split the resulting two trees in $O(\log(\frac{y}{2}))$ time each, and so on. The cost of this process sums up to:
\begin{align*}
 \sum_{i=0}^{\log (\frac{x}{2})} 2^i \cdot \log (\frac{y}{2^i}) & = \sum_{i=0}^{\log (\frac{x}{2})} 2^i \cdot \log y - \sum_{i=0}^{\log (\frac{x}{2})} 2^i \cdot \log (2^i)  \\
& \leq x \log y - 2 \cdot (2^{ \log ( \frac{x}{2})} \cdot \log ( \frac{x}{2}) - 2^{ \log ( \frac{x}{2})}+1)  \\
& = x \log y  -  x \log (\frac{x}{2}) + x - 2 \\ 
&= x \log (\frac{2y}{x}) + x -2 \\
& = O(x \log (\frac{2y}{x})).
\end{align*}
\end{proof}

The cost of all joins required to patch the small trees together can be bounded by the same calculation as the cost of the splits made to obtain them, and so the operation takes $O(x \log (\frac{2y}{x}))$ time in total.

\vspace{0.04in} \noindent {\bf Operation~\ref{op6} -- batched value predecessor.}
Similarly to Operation~\ref{op5}, we start by describing a procedure that works on one tree, then apply the same trick of splitting the tree into small trees, and work on them.

We start each value predecessor query by first finding $k_i$ (or its predecessor, if there is no breakpoint at $k_i$). Then, we go up the path in the tree until the maximum value of a breakpoint corresponding to a leaf in a subtree hanging to the left of the path is at least $v_i$. We then keep going down to the rightmost child that has at least $v_i$ as the maximum value in its subtree. The overall runtime is $O(x \log y)$ since each value predecessor query takes $O(\log y)$ time.

We now apply the same splitting procedure as in Operation \ref{op5}, and obtain $O(x)$ trees with $O(\frac{y}{x})$ leaves each. We would like to iterate over the given keys and run value predecessor queries. 
In order for the time we spend scanning the roots of the trees to be bounded by $O(x)$, we need the intervals defined by the pairs $(k_i',k_i)$ to be disjoint. Recall that we had two uses for the batched value predecessor operation. The first was to prune the non-monotone polyline we have obtained by increasing intervals of $p_2$ (see Figure~\ref{figure of the second case in the construction of the first half of the polyline}). The pruning is done by deleting the intervals returned by this operation.
Consider two consecutive keys $k_1$ and $k_2$ ($k_1 \leq k_2$) for which we would like to find their value predecessors $k_1'$ and $k_2'$. If the value of the polyline at $k_1$  is greater or equal to the value  of the polyline at $k_2$, then certainly $k_1 \leq k_2'$, and the intervals are in fact disjoint. Else, the intervals might not be disjoint, but we know that $k_1' \geq k_2'$, and so the interval $(k_1',k_1)$ is contained in the interval $(k_2',k_2)$. Thus, deleting the second interval is enough, meaning that we do not need to run the value predecessor query for $k_1$ at all. Generalizing this observation, before running the value predecessor queries, we can remove entries from the input list
so that the values of the remaining entries are  monotonically decreasing, in $O(x)$ time. 
We can then answer value predecessor queries for this pruned list of keys using our small trees with $O(\frac{y}{x})$ leaves. We process the keys from right to left while simultaneously scanning the small trees (also from right to left). To process a pair $(k_{i},v_{i})$ we first find $k_{i}$, then we continue moving to the previous small tree as long as the maximal value stored at the root is smaller than $v_{i}$. Once we find the appropriate small subtree, we run a value predecessor query on it (similarly to what we described for one tree). 
Therefore, we search for the value predecessor of each key in one small tree in $O(\log(\frac{2y}{x}))$ time,
and additionally sweep through $O(x)$ roots, so the cost is $O(x \log (\frac{2y}{x}))$. We finish by merging the small trees back into one large tree (as in Operation \ref{op5}) which also takes $O(x \log (\frac{2y}{x}))$ time, and so this is also the time complexity of the entire operation.

The second use of this operation was to find the intersection points of two polylines, $p_a$ and $p_b$. In this case we want to find the intersection points that lie between two consecutive breakpoints of $p_a$, by running value predecessor queries on $p_b$. Thus, while performing a single value predecessor operation, we can stop the search if we have not found the value predecessor in the current interval defined by the pair of consecutive breakpoints of $p_a$. This guarantees that the time spent on scanning the roots of small the trees is $O(x)$, and so the running time of the operation is still $O(x \log (\frac{2y}{x}))$.

\vspace{0.04in} \noindent {\bf Operation~\ref{op7} -- batched interval insertions.}
Again, we split the tree into small trees with $O(\frac{y}{x})$ leaves. Additionally, we split the small trees s.t. every interval
that should be inserted falls between two small trees.
Since we have at most $x$ such extra splits, each of which takes $O(\log(\frac{2y}{x}))$ time, the total time for this is $O(x\log(\frac{2y}{x}))$.
Then, for each of the $O(x)$ breakpoints that are to be inserted, we create a small tree containing just a single leaf corresponding to that
breakpoint. Therefore, we have $O(x)$ small trees with $O(\frac{y}{x})$ leaves that should be now merged
to obtain one large tree.
The merging is done as in the previous operations, and the total cost can be bounded,
up to a constant factor, by essentially the same calculation as in Lemma \ref{running time to obtain small trees lemma}. Therefore, the overall time for the operation is $O(x \log(\frac{2y}{x}))$.

\vspace{0.04in} \noindent {\bf Operation~\ref{op8} -- batched interval deletions.}
Once again we split our tree into small trees. We then delete the relevant intervals from each of the small trees. Each interval is either contained in a single small tree (and is then deleted by performing one or two split operations and at most one join operation) or it spans several small trees (in which case we might also need to delete some trees entirely). This takes $O(x \log(\frac{2y}{x}))$ time in total.

\begin{theorem}
\label{nlogn weighted f.t. theorem}
The above implementation implies an $O(n \log n)$ weighted feasibility test.
\end{theorem}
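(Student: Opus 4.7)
The plan is to verify that the bottom-up algorithm constructs the polyline at the root correctly, and then prove the $O(n\log n)$ time bound. Correctness is already implicit in the per-node construction described above: the polyline at the root encodes, for every achievable minimum distance, the maximum total weight of a valid selection in the whole tree, so one final query at key $\lambda$ decides feasibility. The real work is to bound the running time.

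For the running time, I would first record two structural facts. First, the polyline stored at a node $v$ has at most $|T_v|$ breakpoints, because each breakpoint is witnessed by a distinct chosen node in $T_v$; hence the sizes $x_v,y_v$ of the child polylines at an internal node $v$ are bounded by the corresponding subtree sizes. Second, every non-merge operation performed while processing $v$ (the initial ``include $v$'' query into each child's polyline, the constant number of \texttt{Split}/\texttt{Join} calls needed to isolate the first/second halves, the reduction of the final result, and the comparison against the ``$v$ not included'' value) costs $O(\log n)$ using the 2-3 tree interface. These together contribute $O(n\log n)$ to the total.

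The main term is the merge cost. Operations \ref{op5}--\ref{op8} have each been shown to run in $O(x\log(2y/x))$ time when executed on polylines of sizes $x\le y$, so the amortized work at an internal node with children of polyline sizes $x_v\le y_v$ is $O(x_v\log(2y_v/x_v))$. The plan is to prove, by induction on the size $n$ of the subtree rooted at $v$, the bound
\[
  F(v) \;:=\; \sum_{u \text{ internal in } T_v} x_u \log\!\Bigl(\tfrac{2y_u}{x_u}\Bigr) \;\le\; C\,|T_v|\log|T_v|,
\]
for a sufficiently large constant $C$. Writing $n = a+b+1$ with $a=x_v\le b=y_v$, the inductive step reduces to showing
\[
  a\log\!\Bigl(\tfrac{2b}{a}\Bigr) \;\le\; C\bigl((a+b)\log(a+b) - a\log a - b\log b\bigr).
\]
Using $a\log(2b/a) = a + a\log(b/a) \le a\bigl(1+\log(n/a)\bigr)$ and the identity
\[
  (a+b)\log(a+b) - a\log a - b\log b \;=\; a\log\!\tfrac{n}{a} + b\log\!\tfrac{n}{b} \;\ge\; a\log\!\tfrac{n}{a},
\]
together with $a\le n/2$ (so $\log(n/a)\ge 1$), the inequality holds for $C=2$.

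The expected main obstacle is precisely this amortized analysis: the per-merge bound $O(x\log(2y/x))$ is delicate because neither a ``light-is-always-tiny'' argument nor a direct entropy bound is immediate; the clean way through is the inductive entropy-style estimate above. An equivalent charging view, which I would mention as a sanity check, is that a fixed breakpoint $b$ contained in a subtree of polyline-size $s_{i-1}$ before the $i$-th merge in which it lies on the smaller side ends up in a polyline of size $s_i\ge 2s_{i-1}$; the charge $\log(2y_i/s_{i-1})\le \log(2s_i/s_{i-1})$ telescopes to $O(\log n)$ over all such merges, so summing over $O(n)$ breakpoints recovers $O(n\log n)$. Combining the $O(n\log n)$ merge cost with the $O(n\log n)$ auxiliary cost and the single final polyline lookup yields an $O(n\log n)$ weighted feasibility test, as claimed.
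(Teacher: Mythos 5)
Your main argument is correct and matches the paper's: both prove $\sum_u x_u\log(2y_u/x_u)=O(n\log n)$ by the same induction on subtree size (the paper's direct algebraic manipulation is equivalent to your entropy-style estimate, both effectively with $C=2$), and both similarly gloss over the harmless replacement of polyline sizes by subtree sizes in the merge bound. One caveat on your alternative ``charging'' sanity check: you claim a breakpoint on the smaller side of a merge ends up in a polyline of size $s_i\geq 2s_{i-1}$, but the merge procedure deletes breakpoints (when restoring monotonicity in Cases~I and~II and when intersecting the two candidate halves), so polyline sizes need not grow; the telescoping does go through if you instead charge against subtree sizes, which strictly more than double each time a node lies on the lighter side.
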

\begin{proof}The running time of the algorithm is the sum of the time required for constructing the polylines of the subtrees rooted at each node of the tree.
Notice that the number of breakpoints in the polyline of a subtree is at most the number of nodes in this subtree. This is because every breakpoint corresponds to some node that was taken in the solution becoming too close to the root to be still included.

For any node that has one child (a degree one node), constructing its polyline is done with one query to the polyline of its child, and an increase of the keys of all but the first breakpoint (which can be done by increasing the key stored at the root, and decreasing the key stored at the first leaf). This takes $O(\log n)$ time and so the total time over all degree one nodes is $O(n \log n)$.
 
The time spent for each node that has two children (a degree two node) is bounded by the running time of the batched operations, which is $c\cdot x \log (\frac{2y}{x})$ for some constant $c$. We prove that this sums up to at most $2c \cdot n \log n$ by induction on the size of the  tree: Consider some binary tree with $n$ nodes. Denote the time spent for nodes of degree two while running the weighted feasibility test on the tree by $S$. If the root is of degree one, then by the induction hypothesis $S \leq 2c\cdot (n-1)\log(n-1)\leq 2c\cdot n \log n$. If the root is of degree two, denote the size of its smaller subtree by $x$, and the size of the larger by $y$ (so it holds that $n=x+y+1$, and $x \leq y$). Then, we have that
\begin{align*}
S = & \ 2c\cdot x \log x + 2c\cdot y \log y + c\cdot x \log (\frac{2y}{x})\\
 = & x\log x+y\log y+\frac{x}{2}+\frac{x}{2}\log y-\frac{x}{2}\log x \\
= & \frac{x}{2}\log x+y\log y+\frac{x}{2}+\frac{x}{2}\log y \\
= & \ 2c\cdot (\frac{x}{2}\log(2x)+(\frac{x}{2}+y)\log y) \\
\leq & \ 2c\cdot (\frac{x}{2}\log(x+y)+(\frac{x}{2}+y)\log (x+y)) \\
= & \ 2c\cdot ((x+y)\log(x+y)) \\
\leq & \ 2c\cdot n \log n.
\end{align*}
This concludes the induction and yields our $O(n \log n)$ weighted feasibility test.
\end{proof}

\bibliographystyle{abbrv}
\bibliography{arxiv}

\end{document}